\providecommand{\tabularnewline}{\\}
\providecommand{\algorithmname}{Algorithm}
\providecommand{\tabularnewline}{\\}
\numberwithin{equation}{section}
\numberwithin{figure}{section}
\theoremstyle{plain}
\newtheorem{thm}{\protect\theoremname}
\theoremstyle{plain}
\newtheorem{lem}[thm]{\protect\lemmaname}
\theoremstyle{plain}
\newtheorem{prop}[thm]{\protect\propositionname}
\providecommand{\lemmaname}{Lemma}
\providecommand{\theoremname}{Theorem}
\providecommand{\propositionname}{Proposition}
\begin{document}

\title[Behavioral event detection and rate estimation for AV evaluation]{Behavioral event detection and rate estimation for autonomous vehicle evaluation}
\author{Maria A. Terres, Aiyou Chen, Rachel Zhou, Claire M. McLeod \\
\\
Waymo Data Science}
\date{\today}

\maketitle

\begin{abstract}
    Autonomous vehicles are continually increasing their presence on public roads. However, before any new autonomous driving software can be approved, it must first undergo a rigorous assessment of  driving quality.
 These quality evaluations typically focus on estimating the frequency of (undesirable) behavioral events. While rate estimation would be straight-forward with complete data, in the autonomous driving setting this estimation is greatly complicated by the fact that \textit{detecting} these events within large driving logs is a non-trivial task that often involves  human reviewers. In this paper we outline a \textit{streaming partial tiered event review} configuration that ensures both high recall and high precision on the events of interest. In addition, the framework allows for valid streaming estimates at any phase of the data collection process, even when labels are incomplete, for which we develop the maximum likelihood estimate and show it is unbiased. Constructing honest and effective confidence intervals (CI) for these rate estimates, particularly for rare safety-critical events, is a novel and challenging statistical problem due to the complexity of the data likelihood. We develop and compare several CI approximations, including a novel Gamma CI method that approximates the exact but intractable distribution with a weighted sum of independent Poisson random variables. There is a clear trade-off between statistical coverage and interval width across the different CI methods, and the extent of this trade-off varies depending on the specific application settings (e.g., rare vs. common events). In particular, we argue that our proposed CI method is the best-suited when estimating the rate of safety-critical events where guaranteed coverage of the true parameter value is a prerequisite to safely launching a new ADS on public roads. 
\end{abstract} 

\section{Introduction}
\subsection{Autonomous Driving Systems}
An ``autonomous driving system" (ADS) refers to a level 3, 4, or 5 system \cite{J3016}
that is capable of performing a dynamic driving task on a sustained basis.
It is comprised of both software and hardware components, and any vehicle equipped with this system may be referred to as an "Autonomous Vehicle" (AV). 

Development of safe and high quality ADSs is a very active area of research within both academia and industry \cite{Sun_2020_CVPR,yurtsever2020survey}. Waymo is a prominent player in this space, with its origins in 2009 as the Google Self-Driving Car Project. Today Waymo produces Level 4 ADSs which currently provide fully autonomous rides, i.e., with no human in the driver seat, to members of the public in Arizona and in California. Waymo also has parallel work in the autonomous trucking and delivery space. 

At Waymo, rigorous validation and evaluation of the ADS performance is a core tenet of the deployment philosophy. Prior to releasing AVs equipped with a new ADS onto public roads, we must have confidence that the safety and other performance metrics meet our stringent quality bar. Previous Waymo publications have already outlined the safety and readiness determinations  \cite{webb.etc:2020,schwall.etc:2020}. Complementing those publications, here we articulate some of the statistical and data science challenges that arise during the evaluation process. 

\subsection{ADS Performance Evaluation}
ADS performance evaluation typically involves three primary components:
\begin{enumerate}
    \item Definition of specific undesirable\footnote{There are many desirable behavioral events that are also of interest to track. For simplicity, in this paper we assume that these events can all be negated to produce a corresponding undesirable event. E.g., a desirable event may be ``sufficiently pullover to pick up passenger out of the travel lane," and the associated negative metrics would be things like ``pullover had X percent overlap with travel lane" or ``failed to pullover."} behavioral events, along with associated severity levels or other relevant annotations.
    \item Rate estimation for the behavioral events' relative frequencies on-road, typically in terms of the number of incidents per driving mile (IPM).
    \item Accommodation of product requirements from the stakeholders who use these data for decision-making.
\end{enumerate}
More details on each of these components are provided below.

The behavioral events of interest referenced in component (1) span a spectrum from critical safety situations such as collisions (very rare) to behavioral quirks such as unnecessary slowing (more common). While on its surface this aspect of performance evaluation appears straightforward, in practice rigorous definitions and delineation of edge cases can be a challenge. These challenges are left as out of scope for the current paper. 
Rather, we focus on data and statistical questions associated with the components (2) and (3), and how we are addressing these challenges at Waymo. 

There are several data and statistical questions that arise when producing the rate estimates referenced in component (2):
\begin{enumerate}[label=\alph*)]
    \item What data is best suited to produce these rate estimates? 
    Waymo has autonomously driven tens of millions of miles on public roads as well as  tens of billions of miles in simulation.
    The former have minimal observation error but may have sparse coverage of rare events; the latter may be subject to simulation artifacts but can more easily ensure coverage of rare events.
    
    \item Given the driving data, how can we reliably count these behavioral events? The ease of measurement for an individual class depends not only on the event's frequency (denser events are easier to measure empirically on smaller datasets), but also our ability to algorithmically identify relevant events in the driving data. Identification of behavioral events in a purely algorithmic fashion is difficult and usually requires human review, often of varying levels of expertise ("tiered event review").
    
    \item How can we quantify statistical uncertainty via reliable confidence intervals for the rate estimates? IPM is not a typical metric of interest within the technological industry outside of AVs, nor is it a quantity that receives broad attention from the academic literature. Moreover, the uncertainty introduced from imperfect event detection in (b) further complicates the statistical theory. Finally, these confidence intervals must accommodate a broad range of events, including rare event scenarios, for which traditional asymptotic methods can fail easily.
\end{enumerate}

These statistical challenges are further constrained by the product requirements referred to in component (3): 
\begin{itemize}
    \item In order to drive efficient decision-making at Waymo, these rate estimates must be available as soon as possible after mileage collection. As such, we need the ability to compute rate estimates even when not all candidate events have been reviewed. Early rate estimates may have greater uncertainty than the final estimates, but they should still be unbiased and provide robust confidence intervals.
    \item The most practically important event classes are very sparse (< 5-10 events), which means the most challenging statistical scenarios are also the most important.
    \end{itemize}

The rest of this paper outlines how we address 2(b) and 2(c) while accommodating these product constraints. In Section \ref{sec:triage} we introduce a \textit{streaming partial tiered event review} framework. Then, Sections \ref{sec:model}, \ref{sec:estimate}, \ref{sec:ci} outline the data generation model, estimation, and confidence interval construction implied by this framework. Section \ref{sec:sim} provides numerical studies to assess the performance of the proposed methods for both rare and common event cases. Finally, Section \ref{sec:discussion} concludes the paper with some discussion. Technical details are provided in the appendix, including a relation between Poisson, multivariate Hypergeometric and multinomial distributions, which may be of independent interest.

\section{The Role of Human Review in Event Detection}
\label{sec:triage}
\subsection{Behavioral Event Detection Overview}
As described above, computing the rate at which a behavioral event occurs will require us to first identify and count those events within our driving data. To facilitate this detection at scale, we rely on a set of sophisticated algorithms that can automatically identify events of interest. These classification problems are conceptually similar to many other data science applications in industry. However, unlike most other application settings, edge cases are incredibly important within the autonomous vehicles industry and cannot be swept aside or averaged out. These edge cases, for us, are frequently the most interesting and most important events.

Take, for example, safety-relevant behavioral events which occur very rarely. Errors in their classification will show up as a very small percentage of the algorithm's aggregate performance metrics, but lack of detection for these events can result in a very serious lack of information when deciding whether to put an ADS on public roads. This risk is compounded by the fact that estimating the recall loss by these algorithms can be very challenging and time-consuming since it requires finding sparse False Negatives in a very large collection of True Negatives.

To ensure confidence in our safety estimates at Waymo, we supplement these algorithmic solutions with a final stage of human review.  When tuning any algorithmic classifier, one can adjust the operation point to favor either precision or recall, with very few algorithms being able to achieve high performance in both simultaneously. Our solution allows us to tune the algorithms for very high recall, and then lean on a final layer of human reviewers to fix any gaps in precision. Combined, the algorithm and human reviewers represent a multi-stage strategy that is both scalable and provides trustworthy rate estimates.

\subsection{Review of Candidate Events}
\label{subsec:streaming_triage}
Let $\mathcal{E}_0$ refer to the corpus of candidate events identified by the algorithms. For simplicity, we assume the algorithm feeding events to the human reviewers has perfect recall, so the human reviewers only need to address precision. Then, for any candidate event in $\mathcal{E}_0$, we ultimately want to confirm whether the event was a True Positive or False Positive detection of the behavioral event of interest.\footnote{Note that detailed instructions and rigorous training are often needed to ensure that human reviewers provide high quality labels, particularly for the most nuanced behavioral events. For the purposes of this paper, we assume the human labels are perfect and, as such, recall from this system is perfect.}

A well-optimized event review configuration should maintain both high precision and high recall for True Positive events. If the precision is low, then the rate estimates will be too high; if the recall is low, then the rate estimates will be too low. With this as our motivation, we have developed an event review configuration at Waymo referred to as \textit{streaming partial tiered} event review. The core concepts motivating this configuration are outlined below. 

Under a \textit{complete} event review configuration, all candidate events would be reviewed. However, for many applications this can be inefficient and labor intensive. Instead, reviewing a subset of the candidate events, i.e., \textit{partial} event review, is often sufficient to estimate event rates with acceptable levels of uncertainty. Beyond efficiency, supporting estimates based on partial event review also ensures that we can fulfill the product requirements by providing trustworthy estimates at any stage of the event review process (e.g., checking results after only 10\% of events have received review), and that the uncertainty in those estimates will continually be reduced as more events are reviewed.\footnote{In addition to ongoing event review, the on-road ADS logs and simulated autonomous driving provide a continuous stream of candidate events that are being fed into the event review pipeline. As a result, the corpus of candidate events is constantly changing. For simplicity, this is not addressed here, and we instead assume the full corpus of candidate events is available and unchanging prior to the event review initiation.}

In its simplest form, partial event review utilizes uniform random sampling to select a subset of candidate events for labeling, but this is unlikely to be the most efficient strategy for most use-cases. To reduce sampling error, \textit{stratified random sampling} can instead be used; this involves partitioning candidate events into mutually exclusive, cumulatively exhaustive strata based on pertinent event metadata (e.g. road characteristics, algorithmic model scores). The sampling probability may be identical across strata (i.e. proportional allocation), or may differ across strata (e.g. Neyman allocation or alternative allocation schemes). In some sections of this paper we simplify to assume a single stratum, without loss of generality, while in other sections we outline the statistical implications of this stratification in the data. 

For operational efficiency, an event review configuration may utilize several tiers of human reviewers with increasing levels of expertise in the higher tiers. This \textit{tiered} configuration is arranged such that each tier improves the precision of the candidate event corpus while maintaining recall. For example, in a three tiered process, the first tier would exclude clear false positive candidate events while escalating all other events to the second tier. The second tier would exclude additional false positive candidate events, while escalating the remainder to the third tier. The third tier would then make the final determination of false positive vs. true positive status. Each tier may itself consist of evaluation by a single reviewer or multiple reviewers who reach consensus.

Operationally, partial tiered event review can be implemented as either a \textit{batched} or \textit{streaming} process. Under a batched process, any candidate event that is reviewed in Tier 1 will complete its journey through the system and be confirmed as either a False Positive or make it to the final tier and be confirmed as a True Positive. In this setting, the rate estimates  do not need to account for the details of how candidate events are escalated between tiers; the rate estimates are ultimately identical between a partial tiered setting and a partial \textit{un}tiered setting. However, at Waymo it is often the case that we need to produce unbiased rate estimates while many events are still only partway through the event review pipeline. E.g., a candidate event may have already been reviewed by Tier 1 and escalated, but hasn't received further review yet. This is considered to be a streaming process where rate estimates are made based on the data available at any snapshot in time.

\section{Statistical Formulation for Partial Tiered Event Review}
\label{sec:model}
Given a collection of driving data representing $m$ autonomous driving miles, we are interested in providing an estimate and confidence interval for the rate of occurrence ($\theta$) for the behavioral event of interest. Let $T$ be the number of tiers in the event review configuration and $x_T$ be the total number of True Positive behavioral events in the driving data. We assume that $x_T$ follows a Poisson distribution with mean $\theta m$. Note that $x_T$ is not directly observed; instead, a corpus of candidate events, $\mathcal{E}_0$ are sent through the streaming tiered partial event review process described in Section \ref{subsec:streaming_triage}, resulting in a collection of partial event review results for each tier.\footnote{For simplicity, assume that the algorithmic approach to identify the corpus of candidate events has perfect recall. With this assumption, we don't need to consider its role in the following sections.}  

We first discuss the statistical framing for a single stratum, and then extend this to accommodate multiple strata in a straightforward manner.


\subsection{One stratum} 
\label{subsec:mode_one}

Let $e_{0} = ||\mathcal{E}_0||$ be the number of candidate events in the full corpus. 

\subsubsection{Partial Event Review}

Due to the partial event review configuration, only a subset of the $e_{0}$ candidate events will be randomly sampled and receive review in Tier 1; denoted by $n_{1}\leq e_{0}$. The reviewers will improve the precision of the event pool by clearly labeling the most obvious False Positives (removing them from further review) and escalating any ambiguous events to Tier 2. Let $e_{1} \leq n_{1}$ be the number of events that Tier 1 escalates for review by Tier 2. This process proceeds at each tier with $n_{t} \leq e_{t-1}$ denoting the number of events reviewed by Tier $t$, and $e_{t} \leq n_{t}$ denoting the number of events escalated by Tier $t$ up to Tier $t+1$. Finally, $e_{T}$ will be the number of candidate events formally confirmed as True Positives by Tier $T$.

\begin{figure}
\centering{}
\includegraphics[width=1\textwidth]{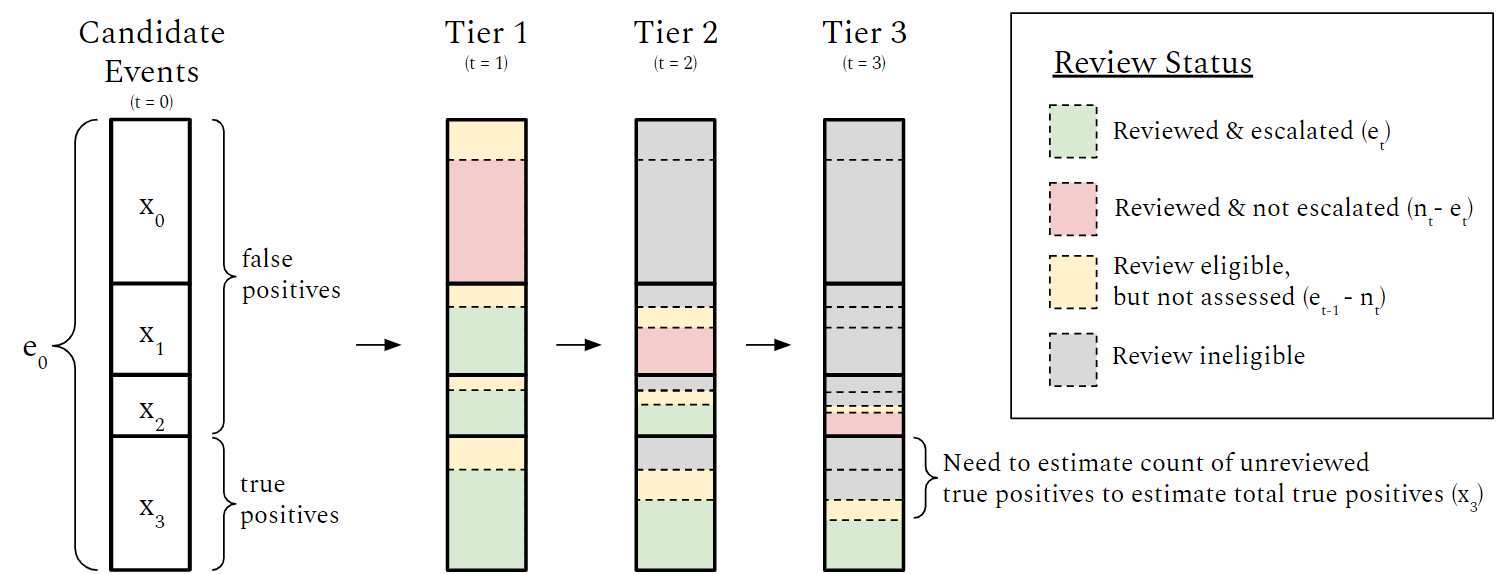} \caption{\label{fig:one_strata_example} Illustration of the sampling and escalation of events in the partial event review process from Tier 1 to Tier 3 (assuming $T=3$ for a single stratum), where $x_3$ is a latent variable representing the number of True Positive events and $x_t$ for $t\in \{0, 1, 2\}$ represents the number of candidate events which if reviewed would be rejected as False Positive by Tier-$(t+1)$ (formally defined in Algorithm \ref{alg:data-model-one-stratum}).}
\end{figure}

Figure \ref{fig:one_strata_example}
provides an illustration of the sampling and escalation process in a single
stratum under a partial event review configuration with 3 Tiers, and the detailed data generation process is described next. 

\subsubsection{Data Generation Model}

Recall that our goal is to estimate the True Positive rate, $\theta$, for the behavioral event of interest. Under complete event review, the total confirmed True Positive count $e_T$ would be equal to the total True Positive count in $\mathcal{E}_0$, and the estimate of $\theta$ would be a simple function of $e_T$. 
However, because our event review is \textit{partial}, we know that $e_T \le x_T$; i.e., there are additional True Positives within the event review pipeline that have not yet been reviewed by the top tier. We have developed an estimation scheme that includes the contribution from these as-of-yet unreviewed True Positives. Towards this end, we propose a data generation model for the \textit{full} set of candidate events in the event review pipeline, including both True Positives \textit{and} False Positives.

To introduce the data generation model, let's consider the potential outcome of each candidate event in $\mathcal{E}_0$ under a \textit{complete} event review configuration. Under the complete event review configuration, a candidate event would first be reviewed by Tier-1, then it would be either rejected as False Positive by Tier-1, or escalated by Tier-1; were the candidate escalated by Tier-1, it would be reviewed by Tier-2, and then it would be either rejected by Tier-2 as False Positive, or escalated by Tier-2, and so on. That is, the event review process continues until the candidate event would be either rejected as False Positive by Tier-$t$ (labeled as FP-$t$) for some $t\in \{1,\cdots, T\}$ or would not be rejected by any Tier (labeled as TP). Whenever a candidate event is rejected as False Positive, it is removed from the event review pipeline for further consideration.

Let $x_T$ be the number of candidate events in $\mathcal{E}_0$ which would be labeled as TP, and let $x_{t-1}$ be the number of candidate events  in $\mathcal{E}_0$ which would be labeled as FP-$t$ for $t\in \{1, \cdots, T\}$,  under the complete event review configuration. 
It is natural to assume that the $x_t$ are independent counts and that they each follow some Poisson process, i.e. for $m$ autonomous miles,
\begin{equation}
x_{t}\sim Poisson(\lambda_{t}m)\text{ for }t=0,\cdots T.\label{eq:model}
\end{equation}
Note that $\sum_{s=t}^Tx_s$ has a simple interpretation: it is the total number of candidate events in $\mathcal{E}_0$ which if fully reviewed would not be rejected by Tier-$t$.
Again, in the \textit{complete} event review case these $x_{s}$ would be observed, but in the \textit{partial} event review configuration these are instead treated as latent variables; only their total is observed, i.e.
$$e_0 = \sum_{t=0}^T x_t.$$

Now given the set of candidate events $\mathcal{E}_0$, we draw a random sample of size $n_1$ without replacement for Tier-1 review, among which the subset of candidate events with labels other than FP-$1$ --- that is, not rejected by Tier-1 --- will be denoted as $\mathcal{E}_1$ and will be escalated for Tier-2 review. 
The process continues as follows: for $t=2, \cdots, T$, we draw a random sample of size $n_t$ from $\mathcal{E}_{t-1}$ without replacement for Tier-$t$ review, among which the subset of candidate events with labels other than FP-$t$ --- that is, not rejected by Tier-$t$ --- will be denoted as $\mathcal{E}_t$ and will be escalated for Tier-$(t+1)$ review. 
Let $e_t$ be the number of candidate events escalated by Tier-$t$:
$$e_t = ||\mathcal{E}_t||.$$
Obviously, $n_t - e_t$ is the number of candidate events which are reviewed but rejected as False Positives by Tier-$t$. 
It is also important to note that any candidate event in $\mathcal{E}_t$ will be either TP or FP-$s$ for some $s\in \{t+1, \cdots, T\}$.

\begin{algorithm}[!t]
\caption{Data generation model for partial event review with single stratum}
\label{alg:data-model-one-stratum}

Configuration: $m$ miles and $T$ tiers.

Parameters: $\{\lambda_{t}:0\leq t\leq T\}$ and $\{\pi_{t}:1\leq t\leq T\}$.

\begin{enumerate}
\item Draw
\begin{align*}
x_{t} &\sim Poisson(m \lambda_{t}), \forall t \in \{0,\cdots, T\},
\end{align*}
where $x_T$ is the number of (latent) True Positive events (labeled as TP), and for $t=1,\cdots, T$, 
\begin{itemize}
\item $x_{t-1}$ is the number of (latent) candidate events which are labeled as FP-$t$ --- that is, they would be escalated by Tier-$(t-1)$ but rejected as False Positive by Tier-$t$.
\end{itemize}
Then the total number of candidate events in the event review pipeline is
\begin{align*}
e_0 & = \sum_t x_{t}.
\end{align*}
Let $\mathcal{E}_0$ consist of all $e_0$ candidate events.

\item For tier $t=1,\cdots,T$,
\begin{itemize}
\item If $e_{t-1}>0$:
\begin{itemize}
  \item Let $n_t$ be the number of candidate events which are drawn randomly from $\mathcal{E}_{t-1}$ without replacement and are reviewed by Tier-$t$:
    \begin{align*}
      n_t &\sim \max(1,Binomial(e_{t-1},\pi_{t})).
    \end{align*}
  \item Let $\mathcal{E}_t$ be the subset of the $n_t$ candidate events with labels other than FP-$t$ --- that is, they are escalated by Tier-$t$.
  \item Set $e_{t}=||\mathcal{E}_t||$.
\end{itemize}
\item Else: 
\begin{itemize}
    \item Set $n_s=0$ and $e_s=0$ for $s=t, \cdots, T$.
    \item Break (i.e. early termination).
\end{itemize}
\end{itemize}
\item Output: $\{e_{t}:0\leq t\leq T\}$ and $\{n_{t}:1\leq t\leq T\}$ as  the observed data.
\end{enumerate}
\end{algorithm}

Of course, if $\mathcal{E}_t$ is empty for some $t<T$, i.e. $e_t=0$ (no candidate events are escalated by Tier-$t$), then the partial event review process terminates at Tier-$t$.

It is convenient to model the sample size $n_t$ by a Binomial process. Note that if $e_{t-1} > 0$ but $n_t=0$, i.e. there are $e_{t-1}$ candidate events escalated for Tier-$t$ event review, but none is reviewed by Tier-$t$, then $\theta$ is not identifiable. To make the estimation problem well defined, whenever $e_{t-1}>0$, we need $n_t \geq 1$, and to be precise, we use
\begin{align}
n_{t}\sim \max(1, Binomial(e_{t-1},\pi_{t})),\label{eq:sample}
\end{align}
where $Binomial(e_{t-1}, \pi_t)$ stands for the random variable from a Binomial distribution with parameters $e_{t-1}$ and $\pi_t \in (0, 1]$.

This data generation model is summarized as Algorithm \ref{alg:data-model-one-stratum}, which takes $\{\lambda_t: 0 \leq t \leq T\}$ (associated with the latent variables $x_t$) and $\{\pi_t: 1 \leq t \leq T\}$ (associated with tier level sampling) as the input parameters, and outputs the tier level sample sizes ($n_t$) and the numbers of tier level escalations ($e_t$) as the observed data.

\subsection{Multi-strata} Now suppose we sample from the candidate events via a stratified sampling scheme with $H$ comprehensive and mutually exclusive strata. We can generalize all the notations in Section \ref{subsec:mode_one} in a straightforward way by an additional subscript $h \in \{1\cdots H\}$ indicating the stratum. The detailed data generation model is provided as Algorithm \ref{alg:data-model}, which calls Algorithm \ref{alg:data-model-one-stratum} for each stratum $h$ by taking $\{\lambda_{ht}: 0 \leq t \leq T\}$ (associated with latent variables) and $\{\pi_{ht}: 1 \leq t \leq T\}$ (associated with tier level sampling) as the input parameters, and outputing the tier level sample sizes ($n_{ht}$) and the numbers of tier level escalations ($e_{ht}$) as the observed data. The parameter of interest can be written as 
\begin{align}
\theta = \sum_{h=1}^T\lambda_{hT}. \label{eq:theta}
\end{align}

\begin{algorithm}
\caption{Data generation model for partial event review with multiple strata}
\label{alg:data-model}
Configuration: $m$ miles, $H$ strata, and $T$ tiers.

Parameters: $\{\lambda_{ht}:0\leq t\leq T\}$ and $\{\pi_{ht}:1\leq t\leq T\}$
for $h=1,\hdots,H$. 

\begin{itemize}
\item For each $h\in \{1,\cdots, H\}$, 
\begin{itemize}
\item use $\{\lambda_{ht}: 0\leq t\leq T\}$ and $\{\pi_{ht}: 1\leq t\leq T\}$ as the input parameters and apply Algorithm \ref{alg:data-model-one-stratum} to generate the output for stratum-$h$:
$\{e_{ht}:0\leq t\leq T\}$ and $\{n_{ht}: 1\leq t\leq T\}$.
\end{itemize}

\item Output: $\{e_{ht}\}$ and $\{n_{ht}\}$ as the observed data for all strata.
\end{itemize}
\end{algorithm}

\section{Point estimation}
\label{sec:estimate}
To estimate $\theta$, it is desirable to derive the maximum likelihood estimate (MLE) due to its various optimality properties (\cite{bickel2015mathematical,yatracos1998small,lehmann2006theory}) with the hope that it is also unbiased.
Instead of deriving the MLE directly, we first derive the point estimate based on a heuristic argument, then show that it is indeed the MLE. This is due to the complexity of the likelihood function under the partial event review configuration, with more details provided in the Appendix. The estimate is also unbiased.

Since the data across strata are independent, the rates associated with different strata can be estimated independently.
Without loss of generality, we may consider a single stratum, using the same notation as in Section \ref{subsec:mode_one}.

A natural estimate of $m\lambda_{T}$
is the total number of true positives within $e_{0}$ (all candidate
events in the stratum). This quantity is unfortunately not observable under partial event review and must be estimated instead. We start the argument by assuming $e_{T}>0$. 

First, consider the total number of True Positive events present in the final tier, Tier $T$. Recall that $e_{(T-1)}$ candidate events will be escalated to tier $T$, a random sample of size $n_{T}$ will be reviewed, and $e_{T}$ of those reviewed events will be confirmed as True Positives. Then, one can estimate the total expected number of true positives that were present in Tier $T-1$  as below by up-weighting the observed true positives by the inverse of the sampling weight:
\begin{align*}
e_{T}\left(\frac{n_{T}}{e_{(T-1)}}\right)^{-1}.
\end{align*}
By the same logic, one can estimate the expected number of true positives that were present in Tier $T-2$
to be 
\begin{align*}
e_{T}\left(\frac{n_{T}}{e_{(T-1)}}\right)^{-1}\left(\frac{n_{(T-1)}}{e_{(T-2)}}\right)^{-1} & .
\end{align*}
Iterating through each of the tiers leads to the estimate for the total number of true positives in all $e_0$ candidate events in the stratum:
\begin{align*}
e_{T}\left(\frac{n_{T}}{e_{(T-1)}}\right)^{-1}\left(\frac{n_{(T-1)}}{e_{(T-2)}}\right)^{-1}\cdots\left(\frac{n_{1}}{e_{0}}\right)^{-1} & .
\end{align*}
Normalizing the above by $m$ after simplification leads to the estimate of
$\lambda_{T}$: 
\begin{align}
\hat{\lambda}_{T} & =\frac{1}{m}\frac{\prod_{t=0}^{T}e_{t}}{\prod_{t=1}^{T}n_{t}}.\label{eq:hat-lambda-hT}
\end{align}

Note, an edge case occurs when $e_{T}=0$, or even $e_{t}=0$ for some $t<T$;
that is, when no events are escalated to Tier $(t+1)$ for some $t$. In these cases the partial event review
process would terminate early, and the natural estimate is $\hat{\lambda}_{T}=0$.

For $t \in \{0, \cdots, T\}$, let
$$\Lambda_t = \sum_{s=t}^T \lambda_s.$$
That is, $\Lambda_t$ is the overall Poisson rate for candidate events in $\mathcal{E}_0$ which if fully reviewed would not be rejected by Tier-$t$ for $t\in\{1, \cdots, T\}$. Note $\Lambda_0\geq \cdots \geq \Lambda_T$.

Similar to the heuristic derivation of \eqref{eq:hat-lambda-hT}, one can estimate $\Lambda_t$ as follows:
$$\hat{\Lambda}_0 = \frac{e_0}{m}$$ 
and for $t\geq 1$,
\begin{align}
    \hat{\Lambda}_{t}=\frac{1}{m}\frac{\prod_{s=0}^{t}e_{s}}{\prod_{s=1}^{t}n_{s}}. \label{eq:Lambda-hat}
\end{align}
Note that $\hat{\Lambda}_t$ is well defined whenever $e_{t-1}\geq 1$, which is guaranteed by the sampling model \eqref{eq:sample}. If $\hat{\Lambda}_t=0$, which may occur due to $e_t=0$ (i.e. early termination if $t<T$, see Algorithm \ref{alg:data-model-one-stratum}), then $\hat{\Lambda}_s = 0$ for $s\geq t$ due to monotonicity. Therefore, the estimates are always well-defined. Note that $\hat{\Lambda}_T \equiv \hat{\lambda}_T$. In fact, the estimates are not only MLE but unbiased.

\begin{thm}
\label{thm:MLE}Under the Poisson model \eqref{eq:model} and the sampling model \eqref{eq:sample},
\begin{enumerate}
\item $\{\hat{\Lambda}_t: 0\leq t\leq T\}$ as defined above 
is the MLE of $\{\Lambda_t: 0\leq t\leq T\}$.
\item The point estimates are unbiased, i.e. $E(\hat{\Lambda}_t)=\Lambda_t$, $\forall t\in\{0, \cdots, T\}$. 
\item The MLE for $\{\lambda_{t}\}$ for $t=0,\cdots,T-1$ is given by
\begin{align}
\hat{\lambda}_{t} & = \hat{\Lambda}_{t}-\hat{\Lambda}_{(t+1)}.\label{eq:lambda-hat}
\end{align}
\end{enumerate}
\end{thm}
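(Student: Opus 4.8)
The plan is to recast the model in a parametrization under which the observed-data likelihood factorizes. For $t\ge 1$ set $\rho_t=\Lambda_t/\Lambda_{t-1}\in(0,1]$, so that $(\lambda_0,\dots,\lambda_T)$ is in smooth bijection with $(\Lambda_0,\rho_1,\dots,\rho_T)$ via $\Lambda_t=\Lambda_0\prod_{s\le t}\rho_s$ and $\lambda_t=\Lambda_t-\Lambda_{t+1}$ (with $\lambda_T=\Lambda_T$). I would then show that the joint law of the observed data $(e_0,\{n_t\},\{e_t\})$ factors as
\[
P(e_0)\;\prod_{t=1}^T P(n_t\mid e_{t-1})\;\prod_{t=1}^T P(e_t\mid n_t,e_{t-1}),
\]
where by \eqref{eq:model} $e_0\sim Poisson(m\Lambda_0)$, each $P(n_t\mid e_{t-1})$ is governed by $\pi_t$ alone, and --- the fact everything hinges on --- $e_t\mid (n_t,e_{t-1})\sim Binomial(n_t,\rho_t)$. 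Granting this, the $\lambda$-part of the log-likelihood (the $P(n_t\mid e_{t-1})$ factors are constant in $\lambda$ for fixed data) splits into one Poisson term in $\Lambda_0$ and $T$ variation-independent binomial terms in $\rho_1,\dots,\rho_T$.

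The crux, and where I expect the real work to lie (supplied by the Poisson / multivariate-hypergeometric / multinomial relation of the appendix), is the conditional-binomial claim. I would prove it by induction on $t$ maintaining the invariant: conditional on $e_t$ and the observed history, the latent fine-type composition of $\mathcal{E}_t$ is $Multinomial$ over types $j\ge t$ with probabilities $\lambda_j/\Lambda_t$. The base case $t=0$ is the standard fact that independent Poissons conditioned on their sum $e_0$ are multinomial. For the inductive step, Tier-$t$ review draws $n_t$ events uniformly without replacement from $\mathcal{E}_{t-1}$ independently of the latent labels (the $\max(1,\cdot)$ of \eqref{eq:sample} only reshaping the $\lambda$-free law of $n_t$); sampling without replacement from a multinomial population returns a sample that is itself multinomial with the same type-probabilities, so the count of escalated (type-$\ge t$) events is $Binomial(n_t,\Lambda_t/\Lambda_{t-1})=Binomial(n_t,\rho_t)$, and re-normalizing the surviving types over $\{j\ge t\}$ restores the invariant at level $t$. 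The delicate points are verifying that conditioning on the realized $n_t$ leaves the reviewed set a uniform subset (so the hypergeometric-to-multinomial identity applies verbatim), and confirming that the $P(n_t\mid e_{t-1})$ factors genuinely carry no information about $\lambda$ so that they drop out of the $\lambda$-score.

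With the factorization in hand, part (1) is immediate: the Poisson term is maximized at $\hat{\Lambda}_0=e_0/m$ and each binomial term at $\hat{\rho}_t=e_t/n_t$, and because the parameters are variation-independent these separate maximizations are jointly optimal; substituting into $\Lambda_t=\Lambda_0\prod_{s\le t}\rho_s$ reproduces \eqref{eq:Lambda-hat} and confirms that the heuristic inverse-probability-weighted estimator coincides with the MLE. Part (3) then follows by functional invariance of the MLE under the bijection $\lambda_t=\Lambda_t-\Lambda_{t+1}$, giving \eqref{eq:lambda-hat}. The well-definedness caveats ($n_t\ge 1$ whenever $e_{t-1}\ge 1$, and the convention $\hat{\Lambda}_t=0$ after early termination) align with the degenerate or vacuous binomial factors, so the edge cases are covered.

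For unbiasedness (part 2) I would avoid the full likelihood and use only the conditional mean $E[e_t\mid n_t,e_{t-1}]=n_t\rho_t$ (the multinomial-mean-then-hypergeometric-mean computation underlying the step above), so that $E[e_t/n_t\mid \mathcal{F}_{t-1}]=\rho_t=\Lambda_t/\Lambda_{t-1}$ on $\{e_{t-1}\ge 1\}$. Writing $\hat{\Lambda}_t=\hat{\Lambda}_{t-1}\,(e_t/n_t)$ and taking conditional expectations yields $E[\hat{\Lambda}_t\mid\mathcal{F}_{t-1}]=(\Lambda_t/\Lambda_{t-1})\,\hat{\Lambda}_{t-1}$, an identity that also holds trivially on $\{e_{t-1}=0\}$ since both sides vanish. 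Induction from the base case $E[\hat{\Lambda}_0]=E[e_0]/m=\Lambda_0$ then gives $E[\hat{\Lambda}_t]=\Lambda_t$ for every $t$, completing the proof.
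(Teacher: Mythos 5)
Your proof is correct, but it takes a genuinely different route from the paper. The paper deliberately avoids the observed-data likelihood (which it deems too complex), instead writing the complete-data likelihood in the latent counts $x_{ts}$, running the EM algorithm --- M-step $\lambda_t=E(x_{t0}\mid\vec e,\vec n)$, E-step via the shifted-multinomial conditional of Theorem \ref{lem:poisson-hypergeometric-multinomial} --- and solving the fixed-point equations to recover $\hat{\Lambda}_t$ in \eqref{eq:Lambda-hat}; unbiasedness is then proved separately (Lemma \ref{lem:unbiasedness}) by peeling iterated hypergeometric conditional means from the top tier down. You instead reparametrize to $(\Lambda_0,\rho_1,\dots,\rho_T)$ with $\rho_t=\Lambda_t/\Lambda_{t-1}$ and show the observed-data likelihood itself factorizes into a $Poisson(m\Lambda_0)$ term for $e_0$, $\lambda$-free factors $P(n_t\mid e_{t-1})$, and independent $Binomial(n_t,\rho_t)$ factors for the $e_t$; the MLE then falls out of separate one-dimensional maximizations plus invariance under reparametrization. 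Your crux --- that conditional on the observed history the composition of $\mathcal{E}_{t-1}$ is i.i.d.\ over the surviving labels with probabilities $\lambda_j/\Lambda_{t-1}$, so that $e_t\mid(n_t,e_{t-1})\sim Binomial(n_t,\rho_t)$ --- is exactly the content of part 2) of Theorem \ref{lem:poisson-hypergeometric-multinomial} (independent Poissons conditioned on their sum are multinomial, and a uniformly chosen subset of an i.i.d.-typed population is i.i.d.\ with the same type law), so the same probabilistic engine powers both arguments. What your route buys: uniqueness and global optimality of the maximizer are transparent (each factor is a standard concave one-parameter likelihood and the parameters are variation independent), no EM convergence argument is needed, and you obtain the exact finite-sample law $e_t\mid n_t,e_{t-1}\sim Binomial(n_t,\rho_t)$, which sharpens the approximate Poisson-thinning step (Lemma \ref{lem:poisson} and \eqref{eq:e_hT-is-Poisson}) that the paper only invokes asymptotically for the Wald and Gamma intervals. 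Your unbiasedness argument is essentially the paper's telescoping of conditional expectations, just run forward on observables ($E[e_t/n_t\mid\mathcal{F}_{t-1}]=\rho_t$ on $\{e_{t-1}\geq 1\}$) rather than backward through the latent table; both handle the early-termination event by noting that $\hat{\Lambda}_t$ vanishes there. The only places to be fastidious if you write this up in full are (i) verifying that $n_t$ depends on the past only through $e_{t-1}$ so the $P(n_t\mid e_{t-1})$ factors truly carry no information about $\lambda$ (true by \eqref{eq:sample}), and (ii) the boundary of the reparametrization when some $\lambda_t=0$ or when $e_{t-1}=0$ leaves $\rho_s$, $s\geq t$, unidentified --- in which case every extension of the partial maximizer still yields $\hat{\Lambda}_s=0$, consistent with the stated convention.
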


The derivation of the MLE for $\{\Lambda_{t}\}$ (and thus $\{\lambda_t\}$) relies
on a novel application of the Expectation-Maximization algorithm \cite{dempster1977maximum}.
The complete proof is provided in the Appendix.

For multi-strata, since the data across strata are assumed to follow independent random processes, we can apply the MLE estimator described in Theorem  \ref{thm:MLE} to each stratum $h$ and obtain the estimate $\hat{\lambda}_{ht}$ for $t=0,\cdots, T$. Then
the MLE of $\theta$ \eqref{eq:theta} is given by
\begin{align}
\hat{\theta} & =\sum_h\hat{\lambda}_{hT}.\label{eq:hat-theta}
\end{align}
which is unbiased due to additivity and Theorem \ref{thm:MLE}.

\section{Confidence Interval}
\label{sec:ci}
As described in Algorithm \ref{alg:data-model}, our model involves a single parameter of interest $\theta$ but a large set of nuisance
parameters $\lambda_{ht}$ and $\pi_{ht}$. Due to the complexity
of the model, it is hard to derive an ``exact'' confidence interval
for $\theta$. 
As such, we consider several possible strategies for computing confidence intervals (CIs), then assess the relative metris of each strategy.

We first describe two standard approaches
based on a parametric bootstrap and a normal approximation (Wald CI) respectively.
These methods are expected to work well asymptotically as the expected True Positive event count increases, i.e., as the mileage $m$ goes to infinity. However, the CIs may be less reliable when the true positive events are rare. Due to the product importance of these rare event settings, we propose a
novel approach by approximating the exact model with a simpler model
-- weighted sum of independent Poissons (WSIP) -- and then applying
the Gamma method (\cite{fay1997confidence}), one of the
most established methods for WSIP. The Gamma method is expected to
work well not only for scenarios with high event counts (common events), but also for scenarios
with relatively few events (rare events).

\subsection{Parametric Bootstrap CI}

With the mileage $m$ fixed, conceptually, the model for the data
generation can be written as $p(\mathcal{D}|\{m,H,T,\{\lambda_{ht}, \pi_{ht}\}\})$,
where $\mathcal{D}$ consists of $\{e_{ht}:1\leq h\leq H,0\leq t\leq T\}$
and $\{n_{ht}:1\leq h\leq H,1\leq t\leq T\}$. The detailed data generation
procedure is provided by Algorithm \ref{alg:data-model}. This is
a parametric model, and the parametric bootstrap CI \cite{efron1992introduction}
can be constructed as follows:

Let $\{\hat{\lambda}_{ht},\hat{\pi}_{ht}\}$ be the point estimates
of $\{\lambda_{ht},\pi_{ht}\}$ respectively, where $\hat{\lambda}_{ht}$s
are obtained by applying the MLE estimator described in Theorem \ref{thm:MLE} to each stratum $h$ separately, and the following empirical estimate is used for $\pi_{ht}$ according to \eqref{eq:sample}:
\begin{align}
\hat{\pi}_{ht} & =\frac{n_{ht}}{e_{h(t-1)}}\label{eq:pi-hat}
\end{align}
which is also MLE when $n_{ht}>1$. It may be worth noting
that when $n_{ht}=1$ and $e_{h(t-1)}>1$, the MLE for $\pi_{ht}$ by the sampling model \eqref{eq:sample} is simply 0, which
is close to \eqref{eq:pi-hat} for large $e_{h(t-1)}$ but would be
less desirable in general.

Then draw $B$ i.i.d. samples of $\mathcal{D}$ according
to the generative process outlined in Algorithm \ref{alg:data-model} by plugging in the parameter estimates from the observed data. For $b=1,\cdots,B$,
\begin{align*}
\mathcal{D}^{(b)} & \sim p(\cdot|\{m,H,T,\{\hat{\lambda}_{ht}, \hat{\pi}_{ht}\}\}).
\end{align*}
For each draw, calculate the parameter estimates $\{\hat{\lambda}_{ht}^{(b)},\hat{\pi}_{ht}^{(b)}\}$
by replacing $\{e_{ht},n_{ht}\}$ in \eqref{eq:lambda-hat} and \eqref{eq:pi-hat}
with corresponding quantities in $\mathcal{D}^{(b)}$, and let
\begin{align*}
\hat{\theta}^{(b)} & =\sum_{h=1}^{H}\hat{\lambda}_{hT}^{(b)}.
\end{align*}

The $1-\alpha$ confidence interval for $\theta$  is given by the $\alpha/2$ and $1-\alpha/2$ quantiles
of $\{\hat{\theta}^{(b)}:b=1,\cdots,B\}$.

\subsection{Wald CI}

In order to construct the Wald CI, we need to estimate $var(\hat{\theta})$.
Unfortunately, the exact formula for $var(\hat{\theta})$ is extremely
complicated. Instead, we derive a first order approximation by assuming
$m$ is large. 

It may be interesting to note that $\hat{\lambda}_{hT}$ as defined by \eqref{eq:hat-lambda-hT} with index $h$ added to indicate the stratum can be rewritten as
\begin{align*}
\hat{\lambda}_{hT} & =\frac{e_{hT}}{m\hat{\pi}_{h}}\\
\end{align*}
where
\begin{align*}
\hat{\pi}_{h} & =\prod_{t=1}^{T}\hat{\pi}_{ht}
\end{align*}
and $\hat{\pi}_{ht}$s are defined by \eqref{eq:pi-hat}. Thus $\hat{\theta}$
in \eqref{eq:hat-theta} can be rewritten as
\begin{align}
\hat{\theta} & =\frac{1}{m}\sum_{h=1}^{H}\frac{e_{hT}}{\hat{\pi}_{h}}. \label{eq:theta_hat_pihat}
\end{align}

\begin{lem}
\label{lem:poisson} If a random variable $X\sim Poisson(\lambda)$,
and $Y\mid X\sim Binomial(X, \pi)$, then $Y\sim Poisson(\lambda\pi)$. 
\end{lem}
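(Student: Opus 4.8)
The plan is to identify the distribution of $Y$ through its probability generating function, which gives the cleanest route; I would also sketch a direct summation as a sanity check. The underlying fact is the standard Poisson-thinning identity, so the argument is short and the real work is purely bookkeeping.

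For the generating-function approach, I would first recall that $X \sim Poisson(\lambda)$ has generating function $G_X(s) = E[s^X] = e^{\lambda(s-1)}$, and that the conditional law $Binomial(x,\pi)$ has generating function $(1-\pi+\pi s)^x$. The key step is to condition on $X$ and apply the tower property:
\[
G_Y(s) = E[s^Y] = E\big[E[s^Y \mid X]\big] = E\big[(1-\pi+\pi s)^X\big] = G_X(1-\pi+\pi s).
\]
Substituting into the Poisson generating function and simplifying the exponent, $\lambda\big((1-\pi+\pi s)-1\big) = \lambda\pi(s-1)$, yields $G_Y(s) = e^{\lambda\pi(s-1)}$, which is exactly the generating function of $Poisson(\lambda\pi)$. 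Since the generating function uniquely determines a distribution on the nonnegative integers, the claim follows.

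As an alternative verification, I would carry out the direct computation
\[
P(Y=y) = \sum_{x\geq y} \binom{x}{y}\pi^y(1-\pi)^{x-y}\,\frac{e^{-\lambda}\lambda^x}{x!},
\]
reindex with $k=x-y$, cancel factorials via $\binom{y+k}{y} = (y+k)!/(y!\,k!)$, and recognize the residual sum $\sum_{k\geq 0}\big((1-\pi)\lambda\big)^k/k! = e^{(1-\pi)\lambda}$. Collecting the remaining terms gives $e^{-\lambda\pi}(\lambda\pi)^y/y!$, confirming the Poisson form.

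There is no genuine obstacle here, so the only thing to watch is care with the bookkeeping: in the direct approach, maintaining correct summation limits and the factorial cancellation, and in the generating-function approach, performing the composition $G_X\circ(s\mapsto 1-\pi+\pi s)$ correctly. I would favor the generating-function argument precisely because it avoids the index manipulation and makes the structure of the result transparent, which is also why it generalizes cleanly to the repeated thinning across tiers used elsewhere in the paper.
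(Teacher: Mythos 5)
Your proof is correct: the generating-function computation $G_Y(s)=G_X(1-\pi+\pi s)=e^{\lambda\pi(s-1)}$ is a complete and standard argument for Poisson thinning, and your direct-summation check is also accurate. The paper states this lemma without proof (treating it as a known fact), so there is no authorial argument to compare against; either of your two routes would suffice, and the generating-function version is indeed the cleaner one for the iterated application across tiers that the paper uses to derive \eqref{eq:e_hT-is-Poisson}.
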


For large $m$, we may approximate the model \eqref{eq:sample} with subscript $h$ added to indicate stratum as
\begin{align*}
n_{ht} & \sim Binomial(e_{h(t-1)},\pi_{ht}).
\end{align*}
Then by iteratively applying Lemma \ref{lem:poisson}, we have 
\begin{align}
e_{hT} & \sim Poisson(m\lambda_{hT}\pi_{h}),\label{eq:e_hT-is-Poisson}
\end{align}
where $\pi_h = \prod_{t=1}^T\pi_{ht}$.
By combining this with \eqref{eq:theta_hat_pihat}, we can obtain the following result. 
\begin{thm}
\label{thm:poisson-normal}Assume $\pi_{ht}>0$ and $\lambda_{ht}>0$
for all $h, t$. As $m\rightarrow\infty$, we have
\begin{align*}
\sqrt{m}(\hat{\theta}-\theta) & \rightarrow\mathcal{N}(0,\sum_{h=1}^{H}\lambda_{hT}\pi_{h}^{-1}),
\end{align*}
where $\mathcal{N}(\mu,\sigma^2)$ denotes the normal distribution with mean $\mu$ and variance $\sigma^2$.
\end{thm}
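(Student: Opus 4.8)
The plan is to reduce the problem to a per-stratum central limit theorem and then add the independent contributions. Since the $H$ strata are generated by independent calls to Algorithm \ref{alg:data-model-one-stratum}, the summands $\hat{\lambda}_{hT}$ in \eqref{eq:theta_hat_pihat} are mutually independent, so it suffices to find the limiting law of $\sqrt{m}(\hat{\lambda}_{hT}-\lambda_{hT})$ for a single $h$, sum the limiting variances, and invoke the Cram\'er--Wold device to collapse to the one-dimensional statement for $\sqrt{m}(\hat{\theta}-\theta)$. Throughout I would work under the stated large-$m$ simplification that replaces \eqref{eq:sample} by $n_{ht}\sim Binomial(e_{h(t-1)},\pi_{ht})$, noting that the $\max(1,\cdot)$ truncation alters the model only on the event $\{Binomial = 0\}$, whose probability is negligible once the relevant Poisson means grow.

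The first substantive step is to pin down the marginal law of the numerator $e_{hT}$. Applying Lemma \ref{lem:poisson} iteratively along the tiers---each tier thins the surviving True Positives by an independent $Binomial(\cdot,\pi_{ht})$ review---gives $e_{hT}\sim Poisson(m\lambda_{hT}\pi_h)$ exactly as recorded in \eqref{eq:e_hT-is-Poisson}, with $\pi_h=\prod_{t=1}^T\pi_{ht}$, and a standard Poisson CLT yields $\sqrt{m}\bigl(e_{hT}/m-\lambda_{hT}\pi_h\bigr)\to\mathcal{N}(0,\lambda_{hT}\pi_h)$. In parallel I would record that each sampling-fraction estimate $\hat{\pi}_{ht}=n_{ht}/e_{h(t-1)}$ from \eqref{eq:pi-hat} is consistent, so that $\hat{\pi}_h=\prod_t\hat{\pi}_{ht}\overset{p}{\to}\pi_h$. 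Writing $\hat{\lambda}_{hT}=e_{hT}/(m\hat{\pi}_h)$ as in \eqref{eq:theta_hat_pihat}, the leading behaviour is that of a ratio whose denominator concentrates at $\pi_h$; treating $\hat{\pi}_h$ as a plug-in consistent estimate and applying a Slutsky/delta-method argument turns $\sqrt{m}(\hat{\lambda}_{hT}-\lambda_{hT})$ into $\mathcal{N}(0,\lambda_{hT}\pi_h/\pi_h^2)=\mathcal{N}(0,\lambda_{hT}/\pi_h)$, and summing over the independent strata produces the claimed $\sum_h\lambda_{hT}/\pi_h$.

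The hard part will be the treatment of the estimated denominator $\hat{\pi}_h$, and this is where I expect the main obstacle to lie. Because $\hat{\pi}_h$ is built from exactly the review counts $n_{ht}$ that also determine which True Positives reach $e_{hT}$, the fluctuations of $\hat{\pi}_h$ and of $e_{hT}$ are correlated, and $\sqrt{m}(\hat{\pi}_h-\pi_h)$ is $O_p(1)$ rather than $o_p(1)$; one therefore cannot simply replace $\hat{\pi}_h$ by $\pi_h$ inside the ratio without justification. A fully careful treatment would establish joint asymptotic normality of the entire count vector $\{e_{ht},n_{ht}\}$ and propagate it through the smooth map $\hat{\lambda}_{hT}=\prod_{t=0}^{T}e_{ht}/(m\prod_{t=1}^{T}n_{ht})$ by a multivariate delta method, tracking the cross-covariances between the numerator and denominator factors. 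The route that recovers precisely the stated variance is the one that uses the oracle fluctuation $e_{hT}/(m\pi_h)$; I would carry out the delta-method bookkeeping to verify this term dominates and to make explicit the sense in which $\sum_h\lambda_{hT}/\pi_h$ captures (and, in general, conservatively bounds) the limiting spread, which is exactly the property one wants for a Wald confidence interval aimed at guaranteed coverage.
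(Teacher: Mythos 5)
Your first two paragraphs are, almost verbatim, the paper's own proof: independence across strata, Lemma \ref{lem:poisson} applied tier by tier to get $e_{hT}\sim Poisson(m\lambda_{hT}\pi_h)$ as in \eqref{eq:e_hT-is-Poisson}, a Poisson CLT for $e_{hT}/m$, and then consistency of $\hat{\pi}_h$ plus Slutsky's theorem to pass from $e_{hT}/(m\pi_h)$ to $e_{hT}/(m\hat{\pi}_h)$. The paper's entire stated justification is that one Slutsky sentence, so on the question of approach you have reproduced it exactly.

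The obstacle you isolate in your third paragraph, however, is genuine, and neither your sketch nor the paper resolves it. Slutsky's theorem licenses replacing $\hat{\pi}_h$ by $\pi_h$ only if the induced perturbation is $o_p(m^{-1/2})$; here $\hat{\theta}-\frac{1}{m}\sum_h e_{hT}\pi_h^{-1}=\frac{1}{m}\sum_h e_{hT}(\hat{\pi}_h^{-1}-\pi_h^{-1})=O_p(m^{-1/2})$, the same order as the fluctuation being tracked, and it is correlated with $e_{hT}$. If you actually carry out the multivariate delta method you propose, you will find that the oracle term does \emph{not} dominate and that the displayed variance is not the exact limit. Concretely, for $H=1$, $T=1$, write $\hat{\lambda}_1=(x_0+x_1)b/\bigl(m(a+b)\bigr)$ with $a\sim Binomial(x_0,\pi_1)$ and $b\sim Binomial(x_1,\pi_1)$ independent given $(x_0,x_1)$; the delta method gives
\begin{align*}
m\,var(\hat{\lambda}_1)\rightarrow \frac{\lambda_1^2}{\lambda_0+\lambda_1}+\frac{\lambda_0\lambda_1}{\pi_1(\lambda_0+\lambda_1)}=\frac{\lambda_1}{\pi_1}-\frac{\lambda_1^2(1-\pi_1)}{\pi_1(\lambda_0+\lambda_1)},
\end{align*}
which is strictly smaller than the theorem's $\lambda_1/\pi_1$ whenever $\pi_1<1$ and $\lambda_1>0$ (they agree only when $\pi_1=1$ or in the regime $\lambda_0/\lambda_1\to\infty$, i.e.\ when false positives dominate). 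So your closing conjecture is the correct one: $\sum_h\lambda_{hT}\pi_h^{-1}$ is a conservative upper bound on the limiting variance, not the limiting variance itself, and a plan that aims to ``verify the oracle term dominates'' so as to recover the displayed variance exactly cannot be completed. To finish honestly you must either prove the theorem with the smaller delta-method variance, or restate it as an asymptotically conservative normal approximation --- which is all the subsequent Wald CI construction actually requires.
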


As $m\rightarrow\infty,$ $\hat{\pi}_{ht}\rightarrow\pi_{ht}$ and
thus $\hat{\pi}_{h}\rightarrow\pi_{h}$, the proof follows from 
Slutsky's theorem.

From Theorem \ref{thm:poisson-normal}, the Wald CI can then be constructed
by $\hat{\theta}\pm z_{\alpha/2}\sqrt{\frac{1}{m}\sum_{h=1}^{H}\hat{\lambda}_{hT}\hat{\pi}_{h}^{-1}}$
based on the plug-in principle, where $z_{\alpha/2}$ is the $1-\alpha/2$
quantile of $\mathcal{N}(0,1)$.

As is well known, proper coverage from the Wald CI requires the asymptotic normality to
hold well. As such, it may not be desirable when the expected event counts are low (i.e., for rare events or for low mileage); for example, its lower bound may become negative. 

\subsection{Gamma CI based on an approximate weighted sum of independent Poissons}
Given the limitations of the normal asymptotic approximation outlined above, we propose an alternative approach to estimating the CIs for $\hat{\theta}$. Consider the estimate in \eqref{eq:theta_hat_pihat}, which can be approximated by the quantity below: 
\begin{align*}
\hat{\theta} & \approx\frac{1}{m}\sum_{h=1}^{H}\frac{e_{hT}}{\pi_{h}},
\end{align*}
where by \eqref{eq:e_hT-is-Poisson} the right hand side is a weighted sum
of independent Poissons (WSIP). The approximate WSIP becomes clear by rewriting $\hat{\theta}$ as,
\begin{align*}
\hat{\theta} & =\sum_{h=1}^{H}w_{h}e_{hT}\\
w_{h} & =\frac{1}{m\hat{\pi}_{h}}\text{ for }h=1,\cdots,H.
\end{align*}

The distribution for WSIP random variables has been studied extensively
in the statistics literature \cite{ng2008confidence,swift2010simulation,tiwari2006efficient,fay2017confidence}.
A WSIP is close to normally distributed when there are high event counts, but can be far from normal for rare events. This again motivates the consideration of a CI methodology that does not rely on normality. 
In particular, the original Gamma method proposed by \cite{fay1997confidence}
has been conjectured and also shown numerically to provide good coverage properties for the distribution mean
in all simulated scenarios \cite{ng2008confidence,fay2017confidence}.

Our third CI methodology leverages these WSIP properties from the literature and can be described as follows:
\begin{itemize}
\item The lower bound is the $\alpha/2$ quantile of the Gamma distribution
with mean $\hat{\theta}$ and variance $\sum_{h=1}^{H}w_{h}^{2}e_{hT}$. 
\item The upper bound is the $1-\alpha/2$ quantile of the Gamma distribution
with mean $\hat{\theta}+w_{M}$ and variance $\sum_{h=1}^{H}w_{h}^{2}e_{hT}+w_{M}^2$,
where $w_{M}=\max_{1\leq h\leq H}w_{h}$. 
\end{itemize}
Note that the above is based on the original Gamma method, but one may also apply other Gamma methods for WSIP such as the mid-p Gamma interval \cite{fay2017confidence}. 

\section{Numerical Studies}
\label{sec:sim}
\subsection{Simulation Setup}
We provide several simulation studies to assess and compare the performance of the three confidence interval methods defined in Section \ref{sec:ci}. The first two simulation studies compare the performance when the behavioral event is rare vs. when the behavioral event is common, using fixed values for most parameters. The third study illustrates that these coverage properties hold for a wide range of parameter values, beyond the specific values chosen for the first two examples. Throughout the simulation section we set $m = 1$ for convenience (the unit could be 1 mile, 10K miles, or one million miles).

In each study, we simulate the data in a consistent manner. Following Algorithm \ref{alg:data-model}, we generate the $x_{hs} \sim Poisson(\lambda_{hs}), h = 1, \hdots, H$ and $s = 0, \dots, T$. We set $H=5$ strata and assume $T=3$ tiers. Then $e_{h0} = \sum_{s=0}^T x_{hs}$ will be the total number of candidate events entering the tiered event review pipeline in stratum $h$. 

For the first two simulation studies in Section \ref{subsec: sim_common} and \ref{subsec: sim_rare}, we set the sampling rates
%
%
$$
\{ \pi_{ht} \} = 
\begin{pmatrix}
\pi_1 & 0.5 & 0.95 \\
\pi_1 & 0.6 & 0.96 \\
\pi_1 & 0.7 & 0.97 \\
\pi_1 & 0.8 & 0.98 \\
\pi_1 & 0.9 & 0.99
\end{pmatrix}.
$$

Note that when this event review methodology is put into practice one can imagine a fixed set of events to review, and the Tier 1 sampling rate will steadily increase over time as reviewers complete more events. We simulate this event review progression by varying the sampling rate at Tier 1, $\pi_1\in [0.1, 1.0]$. In this way, we can verify whether the confidence intervals have good coverage properties throughout the entire event review process.

For the third simulation study in Section \ref{subsec: sim_comprehensive}, we 
consider a wide range of scenarios.
For each scenario, we generate each Tier 1 sampling rate $\pi_{h1} \sim \text{Uniform }(0,1)$, and generate $\pi_{ht} \sim \text{Uniform } (\pi_{h(t-1)},1), t = 2,3,$ so that the sampling rate for each strata will not decrease as Tier goes up. $\{\lambda_{ht}\}$ is specified later.

Given each set of parameter values ($\{\lambda_{ht}\}$ and $\{\pi_{ht}\}$), we simulate the tiered partial event review data $\{e_{ht}\}$ and $\{n_{ht}\}$ following Algorithm \ref{alg:data-model} and replicate 1000 times. This allows us to obtain the empirical coverage and average CI width under each set of parameters.

\begin{figure}
\includegraphics[width=0.8\textwidth]{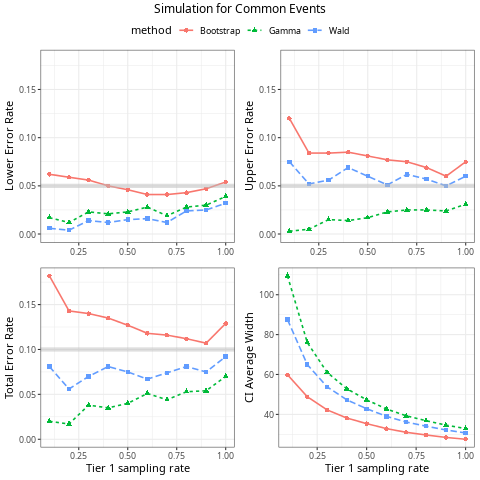}
\centering
\caption{\label{fig:dense} Common Events. Average one-sided coverage error of $90\%$ confidence intervals for lower bound (top left panel), upper bound (top right panel), total coverage error (bottom left panel) and average lengths of $90\%$ confidence intervals (bottom right panel). These results are based on estimating the total positive event rate $\theta=58$ under model \eqref{eq:model} over 1000 replications. The horizontal lines in the first three panels represent the test level $\alpha=0.05$.}
\end{figure}

\subsection{Simulation for Common Events}
\label{subsec: sim_common}
Some behavioral events are relatively common, and the observed event counts will be relatively high. As described earlier, this can result in better behaved quantities that more closely follow a normal distribution. Here we assess the performance of our proposed CI methodologies in this setting. 

To represent this common event scenario, the values below are used for $\{\lambda_{ht}\}$: 
$$\{\lambda_{ht}\} = 
\begin{pmatrix}
10 & 5 & 2.5 & 18 \\
20 & 15 & 25 & 10 \\
20 & 30 & 8 & 5 \\
5 & 6 & 25 & 10 \\
30 & 12 & 4 & 15 
\end{pmatrix}.$$
With this parameter setting, $\theta = 58$. 

Figure \ref{fig:dense} illustrates the coverage properties for the three CI methods when the behavioral event is common. The x-axis represents the sampling rate, allowing us to assess how well these CIs fulfill the product requirement of adequate coverage regardless of the sampling rate. For all three methods, the coverage rates are more biased (over- or under-covering) when sampling rates are low, with improvements in coverage as event review progresses. As sampling rates increase, the parametric bootstrap rapidly converges close to expected coverage rates, and also produces the narrowest intervals. However, the under-coverage at low sampling rates is non-negligible (82\% coverage vs. the desired 90\%).  In contrast, the Gamma CI consistently over-covers (98\% coverage for low sampling rates) and produces the widest intervals. However, the Gamma method also provides the most symmetric errors, with similar error rates for the upper and lower bounds; this guarantees that the intervals from this approach will not systematically under- or over-estimate the rates. Finally, the Wald CI falls between the other two methods, with intervals that somewhat over-cover.

\begin{figure}
\includegraphics[width=0.8\textwidth]{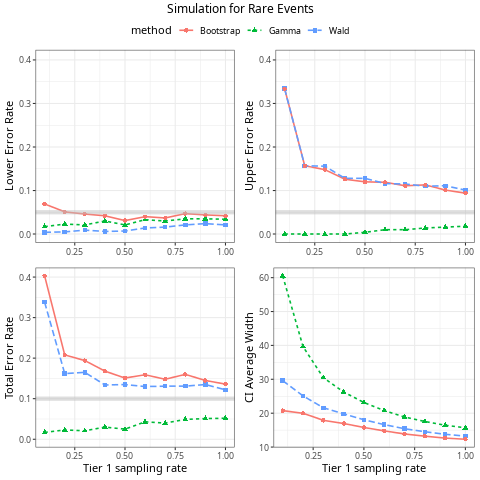}
\centering
\caption{\label{fig:sparse} 
Rare Events. Average one-sided coverage error of $90\%$ confidence intervals for lower bound (top left panel), upper bound (top right panel), total coverage error (bottom left panel) and average lengths of $90\%$ confidence intervals (bottom right panel). These results are based on estimating the total positive event rate $\theta=11$ under model \eqref{eq:model} over 1000 replications. The horizontal lines in the first three panels represent the test level $\alpha=0.05$.}
\end{figure}

\subsection{Simulation for Rare Events}
\label{subsec: sim_rare}
Next, we consider more rare events, where the observed event counts will be quite low. In this setting, the estimates are not going to cleanly follow a normal distribution, making the CI calculation potentially more challenging. Here we assess the performance of our proposed CI methodologies in this setting, again tracking how the coverage properties evolve as more events receive review. 

To represent this scenario, we use the same $\lambda_{ht}$ as before for $t=0,1,2$ but reduce $\lambda_{h3}$ for all $h$:
$$\{\lambda_{ht}\} = 
\begin{pmatrix}
10 & 5 & 2.5 & 4 \\
20 & 15 & 25 & 2 \\
20 & 30 & 8 & 1 \\
5 & 6 & 25 & 2 \\
30 & 12 & 4 & 2 
\end{pmatrix}.$$
With this parameter setting, $\theta = 11$.


Figure \ref{fig:sparse} reports the coverage properties for the three confidence interval methods when the behavioral event is rare. The patterns in this figure are largely similar to the patterns in Figure \ref{fig:dense}. The primary difference between the rare event case and the common event case is the performance at low sampling rates. The under-coverage by the Wald and Parametric Bootstrap is severe at the lowest sampling rates, off by as much as 30 percentage points (60\% coverage vs. the desired 90\%).  In this case, the over-coverage produced by the Gamma CI (97\% vs. the desired 90\%) is clearly preferable. That said, the Gamma CI is much wider than the Parametric Bootstrap, as much as 3x wider for low sampling rates.

\begin{figure}[!t]
\includegraphics[width=0.8\textwidth]{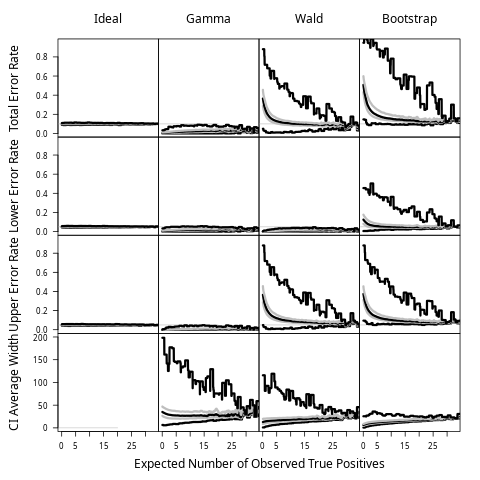}
\centering
\caption{\label{fig:comprehensive} Average error rates and CI widths of 90$\%$ confidence intervals by different methods versus the the expected number of observed True Positives from $m=1$ mileage over $10^5$ scenarios. The lines are the minimum (black), 25$\%$ percentile (gray), median (black), 75$\%$ percentile (gray), and maximum (black) calculated from moving subsets of the data, where the subset
for the value at expected number of observed True Positives includes all observations with var(weights) $\in$ [expected number - 1, expected number + 1]. Reference lines for the error rate panels are drawn at the nominal error rates for a central 90$\%$ confidence interval.}
\end{figure}

\subsection{Comprehensive simulation}
\label{subsec: sim_comprehensive}
In each of the simulations above, the majority of parameters were fixed, while the first tier sampling rate was allowed to vary. In our final simulation we continue to vary the sampling rate, as described earlier, but consider $10^5$ different scenarios covering a wide range of parameter values $\lambda_{ht}$.
In each scenario, we simulate the underlying Poisson rates as follows: $$\lambda_{ht} \sim \text{Exp}(\mu_{ht}), h=1,\cdots,H\text{ and }t=0,\cdots T,$$ where $\mu_{ht} \sim \text{Uniform}(1, 4)$ independently. This simulation study is inspired by the simulation setup in \cite{fay2017confidence}.

Figure \ref{fig:comprehensive} reports the performance comparison. Unlike Figures \ref{fig:dense} and \ref{fig:sparse} where the x-axis represented a range of sampling rates, the x-axis here corresponds to the expected observed number of true positive events per mile, i.e. $\frac{1}{m}\sum_h E(e_{hT})$ which equals $\sum_h \lambda_{hT} \prod_{t=1}^T\pi_{ht}$. In the earlier examples, the underlying event rates were fixed and only the sampling rates varied; in this study, both the sampling rates and the event rates are varying simultaneously. In this setup, this x-axis allows for consideration of how the coverage properties evolve in the asymptotic sense, regardless of whether it is increased review or increased event occurrence. 

The pattern is overall similar to the previous examples, for example, for the lower bound, both Gamma method and Wald method maintain the right coverage. However, the Wald method can be badly under-covered for its upper bound, while parametric bootstrap can be badly under-covered for both lower bound and upper bound. On the other hand, the Gamma method is the only method that always produces reliable confidence intervals for all scenarios. Not surprisingly, this method also produces the widest intervals on average.

\section{Summary and Discussion}\label{sec:discussion}
Before launching any new autonomous driving software onto public roads, it is critical to first evaluate the driving quality of the software. To this end, estimation of behavioral event rates has become a statistical focus area within the autonomous vehicles industry. This estimation process often requires more sophisticated statistical consideration than one might expect at first glance.

As described earlier, accurate estimation of these event rates relies on accurate detection of the behavioral events within the driving logs. For many such events, we rely on algorithmic detectors combined with a partial tiered event review process where human reviewers reject or escalate the candidate events at each tier. Importantly, this event review process is a streaming process that rarely provides complete review of all events in the system. As a result, there are many candidate events with missing outcome labels, complicating the total event count estimation. In addition, each of these events may have already progressed to a different stage within the event review pipeline, and this partial information can be used to better estimate the candidate event's likelihood of being a True Positive.

While the point estimate for the event rate is fairly intuitive, how to construct a reliable confidence interval for the rate is less straightforward. As defined, the data generation model has a very large number of latent variables and nuisance parameters. In addition, the practical application of this work within the autonomous vehicles industry necessitates strong performance for a range of behavioral event types ranging from benign common events to more safety-relevant rare events. As shown through the simulation studies, the different CI methods have different strengths and weaknesses, and these vary depending on the sampling rate and/or event frequency. Thus, the choice of confidence interval may depend on the characteristics of the behavioral event of interest. For a blanket approach that ensures consistent conservative coverage, we propose the Gamma method based on the weighted sum of independent Poissons.  

The Parametric Bootstrap CI is consistently the narrowest CI, and it also frequently under-covers. For common behavioral events with a reasonable rate of review, this method's under-coverage is typically not severe. However, when the sampling rate is low, e.g. under 25\%, this method can drastically undercover. 

The Wald CI is largely similar to the Parametric Bootstrap, except that the Wald CIs are somewhat wider and provide greater coverage. For  applications where it is practical to require minimum rates of review, this method may be attractive to some users. With sufficient guardrails (i.e., only for common events and required minimum rates of review before reporting), this method provides close to the intended coverage and provides narrower intervals that can inform downstream product decision-making by stakeholders. Of course, we do \emph{not} recommend use of this method unless the practitioner can guarantee the application is a setting where the method achieves desired coverage.

In many cases the above methods are not appropriate, for example when the user does not have \emph{a priori} information about the relative frequency of the behavioral event, the behavioral event is known to be rare, or there is a product requirement to report rates at \emph{all} stages of the event review process. In each of these settings, we recommend using the Gamma method based on the weighted sum of independent Poissons, which was shown to maintain consistent coverage across all simulations. 

The Gamma method is a conservative strategy, which consistently over-covers relative to the target coverage rate and has interval widths that are consistently wider than the other methods. While this over-coverage is not strictly desirable (we'd prefer to achieve the target coverage rate precisely!), we believe that a conservative approach is the best strategy when estimating the rate of high priority safety-relevant behavioral events. In addition, the upper and lower error rates are generally quite similar, giving confidence that these intervals will not systematically over- or under-estimate the rate of occurrence for undesirable behavioral events.
These properties are particularly important given the very-real implications of launching ADS on public roads. 

In summary, we have outlined the benefits of the streaming partial tiered event review, but we have also illustrated that estimating confidence intervals for behavioral event rates under this scheme is quite challenging. While the Gamma method had the most consistent coverage, the utility of the resulting confidence intervals is limited by their excessive width and over-coverage in many settings. These findings are similar to the properties observed in the original Gamma method \cite{fay1997confidence}. Future work should further refine and improve these strategies while maintaining the strong properties of the existing approach. 

A vast literature exists on the subject of sampling \cite{cochran1977sampling},
but application of the complex mixed sampling technique as presented
is new for the area of autonomous vehicles. How to construct a valid
confidence interval in the presence of many nuisance parameters is still very much an open problem for rare events, and the standard approaches such as bootstrap and Wald
CI may fail easily for small event counts (e.g. \cite{wang2013note}). The
most relevant work to ours include the profile likelihood-based method
\cite{venzon1988method}, the Buehler method \cite{buehler1957confidence}, and the hybrid sampling method \cite{chuang2000hybrid},
which however cannot be easily applied for our problem, see \cite{sen2009unified,rolke2005limits} for some recent reviews on the subject.

\section{Appendix: proof of Theorem \ref{thm:MLE}}

We first present a theorem about the relationship between
Poisson, multivariate Hypergeometric and multinomial distributions,
which may be of independent interest and is used to prove Theorem \ref{thm:MLE}.
We then proceed to derive the
MLE by making use of the EM algorithm. Results 1) and 3) of Theorem \ref{thm:MLE}
follows directly from Proposition \ref{prop:EM-solution}, and result
2) follows from Lemma \ref{lem:unbiasedness}. In the end, we also
provide an illustrative example for $T=2$ on how to solve the EM
equations.

\subsection{Relation between Poisson, multivariate Hypergeometric and multinomial distributions}

Consider an urn of balls, where each ball has a color $k$ with $k\in\{0,1,\cdots,K\}$.
Let $z_{k}$ be the number of balls with color $k$, and let $s_{z}=\sum_{k=0}^{K}z_{k}$
and $\vec{z}$ be the column vector of $(z_{0},\cdots,z_{K})$.

Let $n\leq s_{z}$ be a random number whose distribution only depends
on $s_{z}$. 

Randomly draw $n\geq 1$ balls from the urn without replacement, and let
$y_{k}$ be the number of balls with color $k$ for $k\in\{1,\cdots,K\}$.
Let $s_{y}=\sum_{k=1}^{K}y_{k}$ and $\vec{y}$ be the column vector of $(y_1, \cdots, y_K)$.

Then, conditional on $(\vec{z},n)$, $\vec{y}$ follows a multivariate
Hypergeometric distribution:
\begin{align*}
p(\vec{y}|\vec{z},n) & =\frac{{z_{0} \choose n-s_{y}}\prod_{k=1}^{K}{z_{k} \choose y_{k}}}{{s_{z} \choose n}}.
\end{align*}

\begin{thm}
\label{lem:poisson-hypergeometric-multinomial}If $z_k \sim Poisson(\lambda_k)$, $k=0,1,\cdots,K$, then the following results
hold.

1) Conditional on $(\vec{y},s_{z},n)$, the distribution of $\vec{z}$
is Multinomial with shifted mean:
\begin{align*}
\vec{z}|(\vec{y},s_{z},n) & \sim\left[\begin{array}{c}
n-s_{y}\\
\vec{y}
\end{array}\right]+Multinomial(s_{z}-n,\left[\begin{array}{c}
\Lambda^{-1}\lambda_{0}\\
\cdots\\
\Lambda^{-1}\lambda_{K}
\end{array}\right])
\end{align*}
where $\Lambda=\sum_{k=0}^{K}\lambda_{k}$.

2) Conditional on $s_{y}$, $\vec{y}$ is independent of $(s_{z},n)$,
and follows a conditional Poisson distribution:
\begin{align*}
p(\vec{y}|s_{y}) & =C(s_{y})\times\prod_{k=1}^{K}\frac{e^{-\lambda_{k}}\lambda_{k}^{y_{k}}}{y_{k}!}
\end{align*}
where $C(s)=\left(\sum_{y_{1}+\cdots+y_{k}=s}\prod_{k=1}^{K}\frac{e^{-\lambda_{k}}\lambda_{k}^{y_{k}}}{y_{k}!}\right){}^{-1}$
is the normalization factor and only depends on $s$. In other words,
\begin{align*}
\vec{y}|(s_{z},n,s_{y}) & \sim\vec{u}|\sum_{k=1}^{K}u_{k}=s_{y}
\end{align*}
where $\vec{u}=(u_{1},\cdots,u_{K})$ and $u_{k}\sim Poisson(\lambda_{k})$ are independent. 
\end{thm}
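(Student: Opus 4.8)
The plan is to compute everything from the joint law of $(\vec{z}, n, \vec{y})$ and then read off \emph{both} conditional distributions by inspection, exploiting a single algebraic identity that decouples the drawn balls from the undrawn balls. Since $n$ depends on $\vec{z}$ only through $s_z$, the joint factors as
\[
p(\vec{z}, n, \vec{y}) = \Big(\prod_{k=0}^K \frac{e^{-\lambda_k}\lambda_k^{z_k}}{z_k!}\Big)\, p(n\mid s_z)\, \frac{\binom{z_0}{n-s_y}\prod_{k=1}^K \binom{z_k}{y_k}}{\binom{s_z}{n}},
\]
where the first factor is the Poisson prior on $\vec{z}$, the second is the (arbitrary) law of $n$ given $s_z$, and the third is the given multivariate Hypergeometric likelihood. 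The entire argument rests on the observation that, writing $y_0 := n - s_y$ for the drawn color-$0$ count and $w_k := z_k - y_k$ for the undrawn count of color $k$,
\[
\frac{\lambda_k^{z_k}}{z_k!}\binom{z_k}{y_k} = \frac{\lambda_k^{y_k}}{y_k!}\cdot\frac{\lambda_k^{w_k}}{w_k!},
\]
so the $\vec{z}$-dependence of the joint cleanly separates into a factor depending only on the drawn vector $\vec{y}$ and a factor $\prod_k \lambda_k^{w_k}/w_k!$ in the undrawn vector $\vec{w}$, subject to $\sum_k w_k = s_z - n$.

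For part 1, I would fix $(\vec{y}, s_z, n)$ and regard the joint as a function of $\vec{z}$, equivalently of $\vec{w}$. Every factor except $\prod_k \lambda_k^{w_k}/w_k!$ (including $p(n\mid s_z)$, which is constant once $s_z$ is fixed) becomes a constant, so the conditional law of $\vec{w}$ is proportional to $\prod_k \lambda_k^{w_k}/w_k!$ on the simplex $\{\sum_k w_k = s_z - n\}$; normalizing via the multinomial theorem identifies this as $\mathrm{Multinomial}(s_z-n, [\lambda_0/\Lambda, \dots, \lambda_K/\Lambda])$. Since $\vec{z}$ equals the fixed drawn vector $(y_0, \vec{y})$ plus $\vec{w}$, this is exactly the claimed shifted multinomial.

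For part 2, I would instead marginalize out $\vec{z}$ (equivalently $\vec{w}$) at fixed $(s_z, n, \vec{y})$. Summing $\prod_k \lambda_k^{w_k}/w_k!$ over $\sum_k w_k = s_z - n$ gives $\Lambda^{s_z-n}/(s_z-n)!$ by the exponential/multinomial series, producing a closed form for $p(\vec{y}, s_z, n)$ in which the only dependence on the individual coordinates of $\vec{y}$ (rather than on $s_y$ alone) is the single factor $\prod_{k=1}^K \lambda_k^{y_k}/y_k!$; every remaining factor depends on $(\vec{y}, s_z, n)$ only through $(s_y, s_z, n)$. Conditioning on $(s_z, n, s_y)$ therefore cancels all of these and leaves $p(\vec{y}\mid s_z, n, s_y) \propto \prod_{k=1}^K \lambda_k^{y_k}/y_k!$, which (i) does not depend on $(s_z,n)$, giving the asserted conditional independence, and (ii) is precisely the law of independent $u_k \sim \mathrm{Poisson}(\lambda_k)$ conditioned on $\sum_{k=1}^K u_k = s_y$, with the normalizer $C(s_y)$ exactly as stated.

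The one point requiring care — and the only real obstacle — is the bookkeeping around color $0$: the drawn color-$0$ count $y_0 = n - s_y$ is not part of $\vec{y}$, yet it appears in the shift vector of part 1 and must be split off from the product $\prod_{k=0}^K$ before conditioning in part 2. Once the separation identity above is in hand, both parts reduce to a choice of whether to condition on $\vec{y}$ (part 1) or sum over $\vec{z}$ (part 2), and neither uses the specific form of $p(n\mid s_z)$ — which is precisely why the conclusion holds for \emph{any} sampling rule depending on $\vec{z}$ only through $s_z$.
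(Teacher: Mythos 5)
Your proof is correct and follows essentially the same route as the paper: both start from the factorization $p(\vec{z},n,\vec{y})=p(\vec{z})\,p(n\mid s_{z})\,p(\vec{y}\mid\vec{z},n)$ and exploit the same cancellation of the hypergeometric binomial coefficients against the Poisson factorials to separate the drawn counts from the undrawn counts $w_{k}=z_{k}-y_{k}$. The only cosmetic difference is in part 2, where the paper recovers $p(\vec{y}\mid s_{z},n,s_{y})$ by dividing the joint by the part-1 conditional (a Bayes-ratio shortcut), while you marginalize $\vec{w}$ explicitly via the multinomial theorem; both yield the same conclusion.
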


\begin{proof}
Note that the joint probability can be written as
\begin{align*}
p(\vec{z},n,\vec{y}) & =p(\vec{z})p(n|s_{z})p(\vec{y}|\vec{z},n).
\end{align*}

So given $(\vec{y},s_{z},n)$, we have
\begin{align*}
p(\vec{z}|\vec{y},s_{z},n) & \propto p(\vec{z})p(\vec{y}|\vec{z},n)\\
 & =\prod_{k=0}^{K}\frac{e^{-\lambda_{k}}\lambda_{k}^{z_{k}}}{z_{k}!}\times\frac{{z_{0} \choose n-s_{y}}\prod_{k=1}^{K}{z_{k} \choose y_{k}}}{{s_{z} \choose n}}\\
 & \propto\frac{\lambda_{0}^{z_{0}}\prod_{k=1}^{K}\lambda_{k}^{z_{k}-y_{k}}}{(z_{0}-n+s_{y})!\prod_{k=1}^{K}(z_{k}-y_{k})!}\\
 & \propto{s_{z}-n \choose \begin{array}{c}
\begin{array}{cccc}
(z_{0}-(n-s_{y})) & (z_{1}-y_{1}) & \cdots & (z_{K}-y_{k})\end{array}\end{array}}p_{0}^{z_{0}-(n-s_{y})}\prod_{k=1}^{K}p_{k}^{z_{k}-y_{k}}
\end{align*}
where $p_k = \Lambda^{-1}\lambda_k$. This completes the proof of 1).

The proof of 2) follows from the Bayes theorem, i.e. 
\begin{align*}
p(\vec{y}|s_{z},n,s_{y}) & =\frac{p(\vec{z},\vec{y}|s_{z},n,s_{y})}{p(\vec{z}|\vec{y},s_{z},n)}\\
 & \propto\frac{p(\vec{z})p(\vec{y}|\vec{z},n)}{p(\vec{z}|\vec{y},s_{z},n)}.
\end{align*}
\end{proof}

\subsection{Likelihood inference for a single stratum}


The data for a single stratum can be described as $(e_{0},n_{1},e_{1},\cdots,n_{T},e_{T})$.
Note that in case the partial event review process is terminated earlier due to no escalation, say $e_t=0$ for some tier $t<T$, then there will be no data from higher tiers (i.e., $n_s=0$ and $e_s=0$ for $s>t$ according to Algorithm \ref{alg:data-model-one-stratum}), so we may simply reset $T$ to $t$ for the data likelihood calculation.
Let $\vec{e}=(e_{0},\cdots,e_{T})$ and $\vec{n}=(n_{1},\cdots,n_{T})$.
The observed data consists of $\vec{e}$ and $\vec{n}$. Without loss
of generality, we will take the mileage $m=1$.

To follow Algorithm \ref{alg:data-model-one-stratum}, let $\mathcal{E}_0$ consist of all $e_0$ candidate events. Let $\mathcal{E}_t$ be the 
set of $e_t$ events which are escalated by Tier-$t$ for $t=1, \cdots, T$, so $\mathcal{E}_T$ consists of the $e_T$ events which are True Positive. Furthermore, each candidate event is associated with a label,
which is either FP-$t$ (if the event would be escalated by Tier-$(t-1)$ but rejected as False Positive by Tier-$t$) for some $t\in \{1,\cdots, T\}$,
or TP (if the event would not be rejected by any Tier), under the complete event review configuration.

For $s\in \{0, \cdots, T-1\}$, let $x_{ts}$ be the number of FP-$t$ events in $\mathcal{E}_s$, $\forall s\leq t < T$, and let $x_{Ts}$ be the number of TP events in $\mathcal{E}_s$, then 
$$e_s = \sum_{t=s}^Tx_{ts}.$$ 
For convenience, we also introduce $x_{TT}\equiv e_T$. With some abuse of notation, $(x_{00},\cdots,x_{T0})$ is identical to $(x_0, \cdots,x_T)$ as defined by \eqref{eq:model}.

Let $\vec{x}_{s}=(x_{ss},\cdots,x_{Ts})$, for $s=0,\cdots,T$. Table \ref{tab:Table-T} summarizes the
notation of $\{x_{ts}:0\leq s\leq t\leq T\}$ and relation with $\vec{e}$.

\begin{table}
\caption{\label{tab:Table-T}Notation for $\{x_{ts}:0\leq s\leq t\leq T\}$, where $x_{ts}$ is the number of FP-$t$ events escalated by Tier-$s$ for $1\leq s\leq t < T$, $x_{Ts}$ is the number of TP events escalated by Tier-$s$, and $\{x_{t0}\}$ is identical to $\{x_t\}$ as defined in Algorithm \ref{alg:data-model-one-stratum}.}
\centering{}%
\begin{tabular}{|c||c|c|c|c|c|}
\hline 
 & Tier-0 & Tier-1 & $\cdots$ & Tier-$(T-1)$ & Tier-$T$ \tabularnewline
\hline 
\hline 
FP-1  & $x_{00}$  & 0  & 0  & 0  & 0 \tabularnewline
\hline 
FP-2  & $x_{10}$ & $x_{11}$  & 0  & 0  & 0 \tabularnewline
\hline 
$\cdots$  & $\cdots$  & $\cdots$ & $\cdots$  & 0 & 0 \tabularnewline
\hline 
FP-$T$  & $x_{(T-1)0}$ & $x_{(T-1)1}$ & $\cdots$ & $x_{(T-1)(T-1)}$  & 0 \tabularnewline
\hline 
TP & $x_{T0}$ & $x_{T1}$ & $\cdots$ & $x_{T(T-1)}$ & $x_{TT}$ \tabularnewline
\hline \hline
Total & $e_0$ & $e_1$ & $\cdots$ & $e_{T-1}$ & $e_T$ \tabularnewline
\hline
\end{tabular}
\end{table}

According to the sampling procedure described in Algorithm \ref{alg:data-model-one-stratum},
the following conditional independence holds:
\begin{align}
P(\vec{x}_{s}|\vec{x}_{0},\cdots,\vec{x}_{s-1},n_{1},\cdots,n_{s}) & =P(\vec{x}_{s}|\vec{x}_{s-1},n_{s}).\label{eq:conditional-independence-sampling}
\end{align}
Also note that $n_{s}$ only depends on $e_{s-1}$. Therefore the
full likelihood function of $\{\vec{x}_s: 0\leq s\leq T\}$ and $\{n_s: 1\leq s\leq T\}$ can be written as 
\begin{align}
L & =P(\vec{x}_{0})\times\prod_{s=1}^{T-1}P(\vec{x}_{s}|\vec{x}_{s-1},n_{s})\times P(e_{T}|\vec{x}_{T-1},n_{T})\times\prod_{s=1}^{T}P(n_{s}|e_{s-1}).\label{eq:full-likelihood}
\end{align}

According to the Poisson assumption described in Algorithm \ref{alg:data-model-one-stratum},
\begin{align*}
P(\vec{x}_{0}) & =\prod_{t=0}^{T}\frac{e^{-\lambda_{t}}\lambda_{t}^{x_{t0}}}{x_{t0}!}
\end{align*}
According to the sampling procedure, 
\begin{align}
P(\vec{x}_{s}|\vec{x}_{s-1},n_{s}) & =\frac{{x_{(s-1)(s-1)} \choose n_{s}-e_{s}}\prod_{t=s}^{T}{x_{t(s-1)} \choose x_{ts}}}{{e_{s-1} \choose n_{s}}}.\label{eq:sampling-dependence}
\end{align}

\begin{lem}
\label{lem:unbiasedness}Let $\hat{\Lambda}_t$ be defined by \eqref{eq:Lambda-hat}, then $\forall t\in \{0, \cdots, T\}$
\begin{align*}
E(\hat{\Lambda}_t) = \Lambda_t.
\end{align*}
\end{lem}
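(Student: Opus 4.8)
The plan is to prove $E[\hat\Lambda_t]=\Lambda_t$ by a downward iterated-conditioning (tower) argument that peels off one tier at a time, exploiting the fact that an event destined to survive through Tier-$t$ passes every earlier rejection step \emph{deterministically}, so its progress through the pipeline is governed purely by the without-replacement subsampling. Throughout I take $m=1$ without loss of generality, as in \eqref{eq:Lambda-hat}. The base case $t=0$ is immediate: $\hat\Lambda_0=e_0=\sum_{r=0}^T x_{r0}$, and since each $x_{r0}\sim Poisson(\lambda_r)$ by \eqref{eq:model}, we get $E[\hat\Lambda_0]=\sum_{r=0}^T\lambda_r=\Lambda_0$.

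For the inductive quantity, fix the target tier $t$ and define $N_s^{(t)}$ to be the number of events in $\mathcal{E}_s$ whose complete-review label would survive Tier-$t$ (i.e.\ labeled TP, or FP-$r$ with $r>t$); in the Table \ref{tab:Table-T} notation, $N_s^{(t)}=\sum_{r=t}^T x_{rs}$. Two facts drive the proof. First, $N_0^{(t)}=\sum_{r=t}^T x_{r0}$ has expectation $\Lambda_t$ by \eqref{eq:model}. Second, and this is the crux, every event counted by $N_{s-1}^{(t)}$ (for $s\le t$) is not FP-$s$ and therefore survives the Tier-$s$ rejection step, so it is escalated into $\mathcal{E}_s$ exactly when it is among the $n_s$ sampled events. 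Hence, conditional on the composition of $\mathcal{E}_{s-1}$ and on $n_s$, the count $N_s^{(t)}$ is the number of ``good'' items in a without-replacement sample of size $n_s$ from an urn of $e_{s-1}$ items containing $N_{s-1}^{(t)}$ good ones; the hypergeometric mean (the distribution in \eqref{eq:sampling-dependence}) gives $E[N_s^{(t)}\mid\mathcal{F}_{s-1},n_s]=n_s N_{s-1}^{(t)}/e_{s-1}$ whenever $e_{s-1}\ge 1$, with $n_s\ge 1$ guaranteed by \eqref{eq:sample}. Here $\mathcal{F}_s$ denotes the $\sigma$-algebra generated by the label compositions of $\mathcal{E}_0,\dots,\mathcal{E}_s$ and the sample sizes $n_1,\dots,n_s$. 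Note also that $N_t^{(t)}=e_t$, since every escalated event in $\mathcal{E}_t$ is by definition a Tier-$t$ survivor.

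With these in hand I would establish, by downward induction on $s$ from $s=t$ to $s=0$, the identity
\[
E\bigl[\hat\Lambda_t \mid \mathcal{F}_s\bigr]=\frac{\prod_{r=0}^{s-1}e_r}{\prod_{r=1}^{s}n_r}\,N_s^{(t)},
\]
where empty products equal $1$. The initialization $s=t$ is immediate: $\hat\Lambda_t$ is $\mathcal{F}_t$-measurable and $N_t^{(t)}=e_t$, so both sides equal $\hat\Lambda_t=\bigl(\prod_{r=0}^t e_r\bigr)/\bigl(\prod_{r=1}^t n_r\bigr)$ from \eqref{eq:Lambda-hat}. For the inductive step I would condition the displayed expression on $\mathcal{F}_{s-1}$ and $n_s$: the factor $\prod_{r=0}^{s-1}e_r/\prod_{r=1}^{s-1}n_r$ is $\mathcal{F}_{s-1}$-measurable, and applying the hypergeometric mean to the leftover $N_s^{(t)}/n_s$ yields $N_{s-1}^{(t)}/e_{s-1}$, collapsing the expression to the same form with $s$ replaced by $s-1$. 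Running the induction down to $s=0$ gives $E[\hat\Lambda_t\mid\mathcal{F}_0]=N_0^{(t)}=\sum_{r=t}^T x_{r0}$, and taking expectations yields $E[\hat\Lambda_t]=\sum_{r=t}^T\lambda_r=\Lambda_t$.

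The step I expect to be the real obstacle is the bookkeeping at early termination, i.e.\ on the event $e_{s-1}=0$ for some $s\le t$. There the hypergeometric mean is formally undefined ($0/0$), but the estimator is genuinely $0$: the factor $\prod_{r=0}^{s-1}e_r$ vanishes and $N_{s-1}^{(t)}=0$ as well, so both sides of the inductive identity are $0$ and the recursion remains valid once this case is checked separately (consistent with the monotonicity convention following \eqref{eq:Lambda-hat}). I would also emphasize that the role of the $\max(1,\cdot)$ in \eqref{eq:sample} is precisely to guarantee $n_s\ge 1$ whenever $e_{s-1}\ge 1$, so that the conditional ratio $N_s^{(t)}/n_s$ is always well-defined on the non-degenerate event. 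One can alternatively route the same computation through the conditional-Poisson independence in part 2) of Theorem \ref{lem:poisson-hypergeometric-multinomial}, but the hypergeometric-mean telescoping above is the most direct path.
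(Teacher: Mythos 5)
Your proof is correct and follows essentially the same route as the paper: an iterated conditioning (tower) argument using the multivariate-hypergeometric conditional mean from \eqref{eq:sampling-dependence} to telescope $\hat\Lambda_t$ down to $\sum_{r=t}^T x_{r0}$, with the early-termination event handled by noting the estimator vanishes there (the paper uses the indicator $I(e_{0:(T-1)}>0)$ for this). The only cosmetic difference is that the paper proves the case $t=T$ (tracking $x_{Ts}$) and asserts the general case is identical, whereas you carry the general survivor count $N_s^{(t)}=\sum_{r=t}^T x_{rs}$ explicitly.
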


\begin{proof}
It is sufficient to prove the result for $t=T$, since the proof is the same for other $t$.
By \eqref{eq:sampling-dependence}, conditional on $\vec{x}_{T-1}$
and $n_{T}$, $x_{TT}$ (which is $e_{T}$) follows a hypergeometric
distribution. Using \eqref{eq:conditional-independence-sampling},
for $e_{T-1}>0$, we have
\begin{align*}
E(e_{T}|\vec{x}_{0},\cdots,\vec{x}_{T-1},n_{1},\cdots,n_{T})= & E(e_{T}|\vec{x}_{T-1},n_{T})\\
= & \frac{x_{T(T-1)}n_{T}}{e_{T-1}}.
\end{align*}
Note that
\begin{align*}
    E(\hat{\Lambda}_T) & = E(\hat{\Lambda}_T I(e_{0:(T-1)} > 0)) \\
    & = E\left(\frac{\prod_{t=0}^{T}e_{t}}{\prod_{t=1}^{T}n_{t}} I(e_{0:(T-1)} > 0)\right),
\end{align*}
where $I(e_{0:t}>0)$ indicates whether $e_s>0$ for all $s\in \{0,\cdots, t\}$.
Then
\begin{align*}
E(\hat{\Lambda}_T) & =E\left(\frac{\prod_{t=0}^{T-1}e_{t}}{\prod_{t=1}^{T}n_{t}}I(e_{0:(T-1)} > 0)\times E(e_{T}|\vec{x}_{0},\cdots,\vec{x}_{T-1},n_{0},\cdots,n_{T})\right)\\
 & =E\left(\frac{\prod_{t=0}^{T-2}e_{t}}{\prod_{t=1}^{T-1}n_{t}}I(e_{0:(T-2)} > 0)\times x_{T(T-1)}\right).
\end{align*}
By iteratively applying \eqref{eq:conditional-independence-sampling} for $e_{s-1}>0$,
\begin{align*}
E(x_{Ts}|\vec{x}_{0},\cdots,\vec{x}_{s-1},n_{1},\cdots,n_{s}) & =E(x_{Ts}|\vec{x}_{s-1},n_{s})=\frac{x_{T(s-1)}n_{s}}{e_{s-1}}
\end{align*}
for $s=T-1,\cdots,1$, we get 
\begin{align*}
E(\hat{\Lambda}_T) & =E(x_{T0}I(e_0>0)) = E(x_{T0})
\end{align*}
which is $\lambda_{T}$ by \eqref{eq:model} since $x_{T0} \equiv x_T$.
\end{proof}
In order to drive the MLE, we apply the EM algorithm.

\subsection{The Expectation-Maximization algorithm}
\begin{lem}
The M-step simplifies to 
\begin{align}
\lambda_{t} & =E(x_{t0}|\vec{e},\vec{n})\text{ for }t=1,\cdots,T,\label{eq:M-equations}
\end{align}
which implies $\sum_{t=0}^{T}\lambda_{t}=e_{0}$, since $\sum_{t=0}^{T}x_{t0}\equiv e_{0}$. 
\end{lem}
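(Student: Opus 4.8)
The plan is to exploit the factorization of the complete-data likelihood \eqref{eq:full-likelihood} and to observe that the rate parameters $\{\lambda_t\}$ enter it only through the single Poisson factor $P(\vec{x}_0)$. First I would take logs in \eqref{eq:full-likelihood} to obtain the complete-data log-likelihood. Its factors fall into three groups: the Poisson term $\log P(\vec{x}_0) = \sum_{t=0}^{T}\bigl(-\lambda_t + x_{t0}\log\lambda_t - \log(x_{t0}!)\bigr)$; the hypergeometric sampling terms $\log P(\vec{x}_s\mid\vec{x}_{s-1},n_s)$ given by \eqref{eq:sampling-dependence}, which are purely combinatorial in the latent counts and contain no $\lambda_t$; and the binomial terms $\log P(n_s\mid e_{s-1})$, which depend on $\pi_s$ and the observed $e_{s-1}$ but not on any $\lambda_t$. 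Hence, viewed as a function of $\{\lambda_t\}$, only the Poisson factor is relevant.

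Next, for the E-step I would form $Q(\lambda\mid\lambda^{(\mathrm{old})}) = E[\log L \mid \vec{e},\vec{n};\lambda^{(\mathrm{old})}]$. By the observation above, the $\lambda$-dependent part of $Q$ is, up to an additive constant free of $\lambda$, the separable concave function $\sum_{t=0}^{T}\bigl(-\lambda_t + E[x_{t0}\mid\vec{e},\vec{n}]\,\log\lambda_t\bigr)$. For the M-step I would maximize this over each $\lambda_t>0$ independently: differentiating the $t$-th summand and setting it to zero gives $-1 + E[x_{t0}\mid\vec{e},\vec{n}]/\lambda_t = 0$, so the maximizer is $\lambda_t = E[x_{t0}\mid\vec{e},\vec{n}]$. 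Concavity confirms this stationary point is the global maximum. Since the derivation is identical for each $t$, the relation holds for all $t\in\{0,\cdots,T\}$, which in particular yields \eqref{eq:M-equations} for $t=1,\cdots,T$.

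Finally, the implication follows by summing the M-step identities over all $t$ and pushing the sum inside the conditional expectation by linearity: $\sum_{t=0}^{T}\lambda_t = E\bigl[\sum_{t=0}^{T}x_{t0}\mid\vec{e},\vec{n}\bigr] = E[e_0\mid\vec{e},\vec{n}] = e_0$, where the last equality uses that $\sum_{t=0}^{T}x_{t0}\equiv e_0$ is part of the observed data $\vec{e}$ and is therefore deterministic given the conditioning.

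I expect the only delicate point to be verifying that the hypergeometric and binomial factors are genuinely free of $\{\lambda_t\}$ --- that is, that the sampling mechanism depends on the latent counts only combinatorially --- so that the M-step decouples across $t$ and reduces to the elementary Poisson MLE. The actual evaluation of the conditional expectation $E[x_{t0}\mid\vec{e},\vec{n}]$, which is the substance of the E-step, is not required for this lemma and is handled separately; here one only needs the form of the M-step update.
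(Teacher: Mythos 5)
Your proof is correct and follows essentially the same route as the paper: both isolate the Poisson factor $P(\vec{x}_0)$ as the only $\lambda$-dependent term in the complete-data likelihood \eqref{eq:full-likelihood}, reduce the M-step to maximizing $\sum_{t}\bigl(E(x_{t0}\mid\vec{e},\vec{n})\log\lambda_t-\lambda_t\bigr)$, and solve the first-order condition to get $\lambda_t=E(x_{t0}\mid\vec{e},\vec{n})$. Your added remarks on concavity and on summing the identities to obtain $\sum_t\lambda_t=e_0$ are harmless elaborations of what the paper leaves implicit.
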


\begin{proof}
Note that in the full likelihood function (\ref{eq:full-likelihood}),
only the term $P(\vec{x}_{0})$ involves the parameters of interest.
So the M-step, which maximizes $E(\log L|\vec{e},\vec{n})$ w.r.t.
$(\lambda_{0},\cdots,\lambda_{T})$, is equivalent to maximization
of 
\begin{align*}
\ell(\lambda_{0},\cdots,\lambda_{T}) & \equiv E(\log P(\vec{x}_{0})|\vec{e},\vec{n})\\
 & =\sum_{t=0}^{T}E(x_{t0}|\vec{e},\vec{n})\times\log\lambda_{t}-\lambda_{t}+const
\end{align*}
where $const$ does not depend on $(\lambda_{0},\cdots,\lambda_{T})$.

The conclusion follows from 
\begin{align*}
\frac{\partial\ell}{\partial\lambda_{t}} & =\frac{E(x_{t0}|\vec{e},\vec{n})}{\lambda_{t}}-1=0.
\end{align*}
\end{proof}
For $s=0,\cdots,T$ let 
\begin{align*}
\Lambda_{s} & \equiv\sum_{t=s}^{T}\lambda_{t}.
\end{align*}

In order to derive the E-step, we need the result below which follows
from Theorem \ref{lem:poisson-hypergeometric-multinomial}. The detailed proof
is omitted for conciseness.
\begin{lem}
\label{lem:EM-multinomial}Conditional on $(e_{0},\vec{x}_{1},n_{1})$,
$\vec{x}_{0}$ is independent of $(\vec{x}_{2},\cdots,\vec{x}_{T},n_{2},\cdots,n_{T})$,
and follows a multinomial distribution with mean shift as follows:
\begin{align*}
\vec{x}_{0}\Big|(e_{0},\vec{x}_{1},n_{1}) & \sim\left[\begin{array}{c}
n_{1}-e_{1}\\
\vec{x}_{1}
\end{array}\right]+Multinomial\left(e_{0}-n_{1},\left(\begin{array}{c}
\Lambda_{0}^{-1}\lambda_{0}\\
\cdots\\
\Lambda_{0}^{-1}\lambda_{T}
\end{array}\right)\right).
\end{align*}
Furthermore, for each $s\in\{1,\cdots,T-1\}$, conditional $(e_{s},\vec{x}_{s+1},n_{s+1})$,
$\vec{x}_{s}$ is independent of $(e_{0},\cdots,e_{s-1},\vec{x}_{s+2},\cdots,\vec{x}_{T},n_{1},\cdots,n_{s},n_{s+2},\cdots,n_{T})$,
and follows multinomial with mean shift:
\begin{align*}
\vec{x}_{s}\Big|(e_{s},\vec{x}_{s+1},n_{s+1}) & \sim\left[\begin{array}{c}
n_{s+1}-e_{s+1}\\
\vec{x}_{s+1}
\end{array}\right]+Multinomial\left(e_{s}-n_{s+1},\left(\begin{array}{c}
\Lambda_{s}^{-1}\lambda_{s}\\
\cdots\\
\Lambda_{s}^{-1}\lambda_{T}
\end{array}\right)\right).
\end{align*}
\end{lem}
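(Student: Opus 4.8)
The plan is to recognize that each single-tier transition in Algorithm~\ref{alg:data-model-one-stratum} is an instance of the urn model underlying Theorem~\ref{lem:poisson-hypergeometric-multinomial}, and then to read both distributional statements off parts 1) and 2) of that theorem. Concretely, for each $s$ I would take the urn to be $\mathcal{E}_s$, whose color composition is $\vec{x}_s=(x_{ss},\dots,x_{Ts})$, identify color $0$ with the FP-$(s+1)$ events (type $s$) that Tier-$(s+1)$ would reject and colors $1,\dots,T-s$ with the surviving types $s+1,\dots,T$, and take the draw to be the Tier-$(s+1)$ sample of size $n_{s+1}$. Under this dictionary $\vec{z}=\vec{x}_s$, $\vec{y}=\vec{x}_{s+1}$, $s_z=e_s$, $s_y=e_{s+1}$, and $n=n_{s+1}$, and the sampling law \eqref{eq:sampling-dependence} is precisely the multivariate hypergeometric $p(\vec{y}\mid\vec{z},n)$ of the urn model, so the matching is exact, including the index correspondence recorded in Table~\ref{tab:Table-T}.

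For the first assertion ($s=0$) this is immediate: by \eqref{eq:model} the composition $\vec{x}_0$ is a vector of independent $Poisson(\lambda_t)$, $t=0,\dots,T$, so the hypothesis of Theorem~\ref{lem:poisson-hypergeometric-multinomial} holds verbatim, and part 1) gives $\vec{x}_0\mid(e_0,\vec{x}_1,n_1)$ as the claimed multinomial with mean shift, the cell probabilities being $\lambda_t/\Lambda_0$ and the offset being $(n_1-e_1,\vec{x}_1)$.

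The substantive obstacle is the step $s\ge 1$, because the urn composition $\vec{x}_s$ is no longer marginally a vector of independent Poissons --- it is the outcome of the earlier subsampling. I would clear this in two moves. First, observe that part 1) of Theorem~\ref{lem:poisson-hypergeometric-multinomial} conditions on the urn total $s_z$; inspecting its proof, the posterior $p(\vec{z}\mid\vec{y},s_z,n)$ depends on the law of $\vec{z}$ only through $p(\vec{z}\mid s_z)\propto\prod_t \lambda_t^{z_t}/z_t!$, i.e.\ only through the \emph{conditional}-Poisson shape (equivalently $Multinomial(e_s,(\lambda_t/\Lambda_s)_{t=s}^{T})$) on the simplex $\sum_t z_t=s_z$. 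Hence it suffices to show that, conditional on $e_s$, $\vec{x}_s$ has exactly this conditional-Poisson law. Second, I would prove that claim by induction on $s$ using part 2): the base case $s=0$ holds by \eqref{eq:model}, and given that $\vec{x}_s\mid e_s$ is conditional-Poisson with rates $\lambda_s,\dots,\lambda_T$ and independent of the upstream history, part 2) applied to the urn $\mathcal{E}_s$ and draw $n_{s+1}$ shows $\vec{x}_{s+1}\mid e_{s+1}$ is again conditional-Poisson, now with rates $\lambda_{s+1},\dots,\lambda_T$ and independent of $(e_s,n_{s+1})$. The proof of part 2) uses only $p(\vec{z}\mid s_z)$ and the hypergeometric kernel, and since \eqref{eq:conditional-independence-sampling} decouples that kernel from everything upstream, the argument transfers unchanged from the marginal-Poisson setting to the conditional-Poisson setting and carries the independence through the whole history. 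With the conditional-Poisson shape of $\vec{x}_s\mid e_s$ in hand, part 1) then yields the stated multinomial-with-mean-shift form for $\vec{x}_s\mid(e_s,\vec{x}_{s+1},n_{s+1})$ with cell probabilities $\lambda_t/\Lambda_s$, $t=s,\dots,T$, and offset $(n_{s+1}-e_{s+1},\vec{x}_{s+1})$.

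Finally, the conditional-independence assertions come from the Markov structure already recorded in \eqref{eq:conditional-independence-sampling} and the factorization \eqref{eq:full-likelihood}: since $\vec{x}_{s+1}$ is generated from $\vec{x}_s$ and $n_{s+1}$ alone, conditioning on $\vec{x}_{s+1}$ (together with $e_s,n_{s+1}$) severs $\vec{x}_s$ from all downstream quantities $\vec{x}_{s+2},\dots,\vec{x}_T,n_{s+2},\dots,n_T$, while independence of $\vec{x}_s$ from the upstream history $(e_0,\dots,e_{s-1},n_1,\dots,n_s)$ is exactly what the induction above supplies. I expect the only genuinely delicate point to be verifying that parts 1) and 2) of Theorem~\ref{lem:poisson-hypergeometric-multinomial} remain valid, together with their independence conclusions, when the urn is only conditionally Poisson rather than marginally Poisson; everything else is bookkeeping of indices between the urn dictionary and Table~\ref{tab:Table-T}.
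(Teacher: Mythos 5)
Your proposal is correct and follows exactly the route the paper indicates: the paper states only that this lemma ``follows from Theorem \ref{lem:poisson-hypergeometric-multinomial}'' and omits the details for conciseness, and your argument supplies precisely those details. In particular, you correctly isolate and resolve the one non-trivial point --- that for $s\ge 1$ the urn composition $\vec{x}_s$ is only \emph{conditionally} Poisson given $e_s$ rather than marginally Poisson --- by observing that parts 1) and 2) of the theorem depend on the law of $\vec{z}$ only through $p(\vec{z}\mid s_z)$, and by propagating the conditional-Poisson shape (and the independence from the upstream history) inductively via part 2) together with the Markov structure \eqref{eq:conditional-independence-sampling}.
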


The result below follows trivially from Lemma \ref{lem:EM-multinomial}. 
\begin{lem}
The E-step can be calculated as below: for $s=0,\cdots,T-1$,
\begin{align}
E(x_{ss}|\vec{e},\vec{n}) & =(n_{s+1}-e_{s+1})+(e_{s}-n_{s+1})\times\frac{\lambda_{s}}{\Lambda_{s}}\label{eq:E-eq-ss}\\
E(x_{ts}|\vec{e},\vec{n}) & =E(x_{t(s+1)}|\vec{e},\vec{n})+(e_{s}-n_{s+1})\times\frac{\lambda_{t}}{\Lambda_{s}},\text{for }t=s+1,\cdots,T.\label{eq:E-eq-st}
\end{align}
\end{lem}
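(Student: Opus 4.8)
The plan is to read the conditional means directly off the shifted-multinomial representation in Lemma \ref{lem:EM-multinomial} and then lift the conditioning from $(e_s,\vec{x}_{s+1},n_{s+1})$ to the full observed data $(\vec{e},\vec{n})$ via the tower property.

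First I would fix $s\in\{0,\cdots,T-1\}$ and take the componentwise expectation of the representation in Lemma \ref{lem:EM-multinomial}. Since a $Multinomial(e_s-n_{s+1},(\Lambda_s^{-1}\lambda_s,\cdots,\Lambda_s^{-1}\lambda_T))$ vector has mean $(e_s-n_{s+1})\Lambda_s^{-1}\lambda_t$ in coordinate $t$, and the additive shift contributes $n_{s+1}-e_{s+1}$ to the $x_{ss}$ coordinate and $x_{t(s+1)}$ to the $x_{ts}$ coordinate for $t\geq s+1$, this gives
\begin{align*}
E(x_{ss}\mid e_s,\vec{x}_{s+1},n_{s+1}) &= (n_{s+1}-e_{s+1}) + (e_s-n_{s+1})\frac{\lambda_s}{\Lambda_s},\\
E(x_{ts}\mid e_s,\vec{x}_{s+1},n_{s+1}) &= x_{t(s+1)} + (e_s-n_{s+1})\frac{\lambda_t}{\Lambda_s},\quad t=s+1,\cdots,T.
\end{align*}

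Next I would pass from conditioning on $(e_s,\vec{x}_{s+1},n_{s+1})$ to conditioning on $(\vec{e},\vec{n})$. The conditional-independence claim of Lemma \ref{lem:EM-multinomial} states that, given $(e_s,\vec{x}_{s+1},n_{s+1})$, the block $\vec{x}_s$ is independent of $(e_0,\cdots,e_{s-1},\vec{x}_{s+2},\cdots,\vec{x}_T,n_1,\cdots,n_s,n_{s+2},\cdots,n_T)$. Every coordinate of $\vec{e}$ and $\vec{n}$ outside the conditioning set is either already listed here or is a function of these variables: $e_{s+1}=\sum_{t\geq s+1}x_{t(s+1)}$ is determined by $\vec{x}_{s+1}$, while $e_{s'}=\sum_{t\geq s'}x_{ts'}$ for $s'>s+1$ depends only on $\vec{x}_{s'}$. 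Hence conditioning additionally on $(\vec{e},\vec{n})$ does not change the conditional law of $\vec{x}_s$, so $E(x_{ts}\mid e_s,\vec{x}_{s+1},n_{s+1})=E(x_{ts}\mid e_s,\vec{x}_{s+1},n_{s+1},\vec{e},\vec{n})$, and the tower property yields $E(x_{ts}\mid\vec{e},\vec{n})=E[\,E(x_{ts}\mid e_s,\vec{x}_{s+1},n_{s+1})\mid\vec{e},\vec{n}\,]$.

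Finally I would take the outer expectation of the two displayed formulas. Because $e_s,e_{s+1},n_{s+1}$ are all coordinates of the observed $(\vec{e},\vec{n})$, the terms $n_{s+1}-e_{s+1}$ and $(e_s-n_{s+1})\lambda_t/\Lambda_s$ are constants under $E(\cdot\mid\vec{e},\vec{n})$ and pass through unchanged, while the only latent term $x_{t(s+1)}$ contributes $E(x_{t(s+1)}\mid\vec{e},\vec{n})$; this produces exactly \eqref{eq:E-eq-ss} and the recursion \eqref{eq:E-eq-st}. The one point requiring care — and the only place the structural content of Lemma \ref{lem:EM-multinomial} is genuinely used — is verifying that the extra conditioning on $(\vec{e},\vec{n})$ is fully absorbed by the stated conditional independence; everything else is a direct computation of multinomial means, which is why the result is asserted to follow trivially.
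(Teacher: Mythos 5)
Your proposal is correct and follows exactly the route the paper intends: the paper simply asserts that the lemma ``follows trivially from Lemma \ref{lem:EM-multinomial},'' and you have supplied precisely the omitted details --- reading off the shifted-multinomial means and justifying, via the stated conditional independence and the tower property, that conditioning on all of $(\vec{e},\vec{n})$ changes nothing. Your observation that the only non-mechanical step is checking that every coordinate of $(\vec{e},\vec{n})$ is either in the conditioning set or a function of the variables covered by the conditional independence is exactly the right point of care.
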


The convergence point satisfies the equations from both the M-step
and E-step, i.e. \eqref{eq:M-equations}, \eqref{eq:E-eq-ss} and
\eqref{eq:E-eq-st}, which is summarized below. 
\begin{prop}
\label{prop:EM-solution}The EM algorithm converges to a unique solution
as below: 
\begin{align*}
\hat{\Lambda}_{t} & =\frac{\prod_{s=0}^{t}e_{s}}{\prod_{s=1}^{t}n_{s}}
\end{align*}
for $t=0,\cdots,T$. Thus the MLE of $(\lambda_{0},\cdots,\lambda_{T})$
is given by $\hat{\lambda}_{t}=\hat{\Lambda}_{t}-\hat{\Lambda}_{t+1}$
for $t=0,\cdots,T-1$ and $\hat{\lambda}_{T}=\hat{\Lambda}_{T}$. 
\end{prop}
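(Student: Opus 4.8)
The plan is to solve the fixed-point equations of the EM iteration directly. At a convergence point the current parameters feed the E-step and are reproduced by the M-step, so the limiting $(\lambda_0,\dots,\lambda_T)$ must satisfy \eqref{eq:M-equations} together with \eqref{eq:E-eq-ss}--\eqref{eq:E-eq-st} simultaneously. Write $a_{ts}=E(x_{ts}\mid\vec e,\vec n)$ for the E-step quantities at this point. The M-step gives $\lambda_t=a_{t0}$ for every $t$ (the case $t=0$ follows because $\sum_t\lambda_t=e_0=\sum_t a_{t0}$), so $\Lambda_0=\sum_{t=0}^T\lambda_t=e_0$ and, more generally, $\Lambda_t=\sum_{u=t}^T a_{u0}$. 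The one structural identity I will use repeatedly is the column-sum relation $\sum_{u=s}^T a_{us}=e_s$, which holds because $e_s=\sum_{u=s}^T x_{us}$ is observed.

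First I would introduce the partial (bottom-of-column) sums $S_{t,c}=\sum_{u=t}^T a_{uc}$, anchored by $S_{t,t}=e_t$ and satisfying $S_{t,0}=\Lambda_t$. Summing the E-step recursion \eqref{eq:E-eq-st} over the rows $u=t,\dots,T$ (all of which lie strictly below the diagonal whenever $t\ge c+1$, so only \eqref{eq:E-eq-st} and not \eqref{eq:E-eq-ss} is invoked) yields
$$S_{t,c}=S_{t,c+1}+(e_c-n_{c+1})\frac{\Lambda_t}{\Lambda_c},\qquad t\ge c+1,$$
where I have used $\sum_{u=t}^T\lambda_u=\Lambda_t$. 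Telescoping this from $c=t$ down to $c=0$ collapses to the single scalar equation
$$\Lambda_t=e_t+\Lambda_t\sum_{c=0}^{t-1}\frac{e_c-n_{c+1}}{\Lambda_c}.$$

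Next I would solve this recursion. Setting $C_t=1-\sum_{c=0}^{t-1}(e_c-n_{c+1})/\Lambda_c$, the displayed equation reads $\Lambda_t C_t=e_t$, i.e. $\Lambda_t=e_t/C_t$, and substituting back gives the clean one-step recursion $C_{t+1}=C_t\,n_{t+1}/e_t$ with $C_0=1$. Hence $C_t=\prod_{s=1}^t n_s/\prod_{s=0}^{t-1}e_s$ and $\hat\Lambda_t=e_t/C_t=\prod_{s=0}^t e_s/\prod_{s=1}^t n_s$, matching \eqref{eq:Lambda-hat}. Uniqueness is immediate from this derivation: $\Lambda_0=e_0$ is forced, and for each $t\ge1$ the equation determines $\Lambda_t$ uniquely from $\Lambda_0,\dots,\Lambda_{t-1}$, so the fixed point is the only solution and the claimed convergence point is unique. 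The MLE statement $\hat\lambda_t=\hat\Lambda_t-\hat\Lambda_{t+1}$ (and $\hat\lambda_T=\hat\Lambda_T$) then follows from $\lambda_t=\Lambda_t-\Lambda_{t+1}$, with the early-termination case $\hat\Lambda_t=0$ once some $e_t=0$ handled by resetting $T$ as in the likelihood set-up.

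The main obstacle I anticipate is not any single calculation but the index bookkeeping on the triangular array $\{x_{ts}\}$: one has to recognize that the right object to track is the bottom-of-column tail sum $S_{t,c}$, whose recursion, after summation over rows, closes in the single sequence $\{\Lambda_t\}$ and decouples from the individual $\lambda_t$. Once that reduction is made the remaining algebra (the telescoping and the $C_t$ recursion) is routine; verifying that the closed form indeed solves the recursion, e.g. via the identity $(e_c-n_{c+1})/\Lambda_c=C_c-C_{c+1}$, is the only place a sign or off-by-one slip could creep in.
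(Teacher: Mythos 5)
Your argument is correct and follows essentially the same route as the paper: the paper's proof reduces the fixed-point system to equations for the tail sums $\Lambda_t$ by summing the E/M equations over rows (carried out explicitly only for $T=2$, with the general case ``omitted for conciseness''), and your tail-sum quantities $S_{t,c}$ together with the telescoped relation $\Lambda_t = e_t + \Lambda_t\sum_{c=0}^{t-1}(e_c-n_{c+1})/\Lambda_c$ and the recursion $C_{t+1}=C_t\,n_{t+1}/e_t$ are exactly the general-$T$ version of that computation (indeed, specializing to $T=2$ reproduces the paper's three displayed equations and their solution). Your write-up in fact supplies the bookkeeping the paper omits, and the uniqueness argument via forward determination of $\Lambda_0,\Lambda_1,\dots$ is sound.
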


The detail for solving the equations is omitted for conciseness. Instead,
we provide the detailed proof for $T=2$ to illustrate the idea.

\subsection{Illustration with $T=2$}

For illustration, below shows the complete E-step and M-step for $T=2$,
for which, the notation of $x_{st}$ is listed in Table \ref{tab:Table-T2}
for convenience.

\begin{table}
\caption{\label{tab:Table-T2}Notation for $\{x_{ts}:0\protect\leq s\protect\leq t\protect\leq2\}$.}
\centering{}%
\begin{tabular}{|c||c|c|c|c|}
\hline 
 & Tier-0 & Tier-1 & Tier-2 \tabularnewline
\hline 
\hline 
FP-1  & $x_{00}$  & 0  & 0 \tabularnewline
\hline 
FP-2  & $x_{10}$ & $x_{11}$  & 0 \tabularnewline
\hline 
TP & $x_{20}$ & $x_{21}$ & $x_{22}$ \tabularnewline
\hline \hline
Total & $e_0$ & $e_1$ & $e_2$ \tabularnewline
\hline
\end{tabular}

\end{table}

According to the description of the EM algorithm in the previous section,
we have, with $obs\equiv(e_{0},n_{1},e_{1},n_{2},e_{2})$ , 
\begin{itemize}
\item M-step: 
\begin{align*}
\lambda_{t} & =E(x_{t0}|obs),\text{ for }t=0,1,2.
\end{align*}
\item E-step: 
\begin{align*}
E(x_{00}|obs) & =(e_{0}-n_{1})\frac{\lambda_{0}}{\sum_{t=0}^{2}\lambda_{t}}+(n_{1}-e_{1})\\
E(x_{10}|obs) & =(e_{0}-n_{1})\frac{\lambda_{1}}{\sum_{t=0}^{2}\lambda_{t}}+(e_{1}-n_{2})\frac{\lambda_{1}}{\sum_{t=1}^{2}\lambda_{t}}+(n_{2}-e_{2})\\
E(x_{20}|obs) & =(e_{0}-n_{1})\frac{\lambda_{2}}{\sum_{t=0}^{2}\lambda_{t}}+(e_{1}-n_{2})\frac{\lambda_{2}}{\sum_{t=1}^{2}\lambda_{t}}+e_{2}.
\end{align*}
\end{itemize}
The convergence point of the EM algorithm, i.e. $(\lambda_{0},\lambda_{1},\lambda_{2})$ satisfies the equations
in both E-step and M-step, which simplifies to: 
\begin{align}
\lambda_{0} & =(e_{0}-n_{1})\frac{\lambda_{0}}{e_{0}}+(n_{1}-e_{1})\label{eq:lambda0}\\
\lambda_{1} & =(e_{0}-n_{1})\frac{\lambda_{1}}{e_{0}}+(e_{1}-n_{2})\frac{\lambda_{1}}{\lambda_{1}+\lambda_{2}}+(n_{2}-e_{2})\label{eq:lambda1}\\
\lambda_{2} & =(e_{0}-n_{1})\frac{\lambda_{2}}{e_{0}}+(e_{1}-n_{2})\frac{\lambda_{2}}{\lambda_{1}+\lambda_{2}}+e_{2}.\label{eq:lambda2}
\end{align}

By summing up both sides of \eqref{eq:lambda0}, \eqref{eq:lambda1}
and \eqref{eq:lambda2}, we get 
\begin{align*}
\lambda_{0}+\lambda_{1}+\lambda_{2} & =e_{0}.
\end{align*}

By summing up both sides of \eqref{eq:lambda1} and \eqref{eq:lambda2},
we get 
\begin{align*}
\lambda_{1}+\lambda_{2} & =(e_{0}-n_{1})\frac{\lambda_{1}+\lambda_{2}}{e_{0}}+(e_{1}-n_{2})+(n_{2}-e_{2})+e_{2}
\end{align*}
and thus 
\begin{align*}
\lambda_{1}+\lambda_{2} & =\frac{e_{0}e_{1}}{n_{1}}.
\end{align*}

By plugging-in this back into \eqref{eq:lambda2}, we get 
\begin{align*}
\lambda_{2} & =\frac{e_{0}e_{1}e_{2}}{n_{1}n_{2}}
\end{align*}
which completes the proof.

\section*{Acknowledgment}
We would like to thank Henning Hohnhold and Kelvin Wu for many insightful discussions.

The event review processes described here are operationalized in partnership with multiple cross-functional teams. We would like to especially thank Vinit Arondekar, Jesse Breazeale, Han Hui, Jeffy Johns, Sidlak Malaki, Kentaro Takada, Ally Yang and Frank Zong for their collaboration.

\bibliographystyle{plain}
\bibliography{main}

\end{document}